\newtheorem{lemma}{Lemma}
\newtheorem{theorem}[lemma]{Theorem}
\renewcommand{\Re}{\operatorname{Re}}
\renewcommand{\Im}{\operatorname{Im}}
\newcommand{\NN}{{\mathbb N}}
\newcommand{\RR}{{\mathbb R}}
\newcommand{\CC}{{\mathbb C}}
\newcommand{\ud}{{\textsf d}}
\newcommand{\N}{\mathcal{N}}
\newcommand{\Rspan}{\operatorname{Span}}
\newcommand{\Cspan}{\operatorname{Span}}
\newcommand{\cH}{\mathcal{H}}
\newcommand{\cD}{\mathcal{D}}
\newcommand{\cF}{\mathcal{F}}
\newcommand{\Tr}{\operatorname{Tr}}
\newcommand{\Susy}{\mathrm{SUSY}}
\date{5 August 2011}
\title[Discrete Supersymmetric Matrix Models]{Spectral properties in supersymmetric matrix models}
\author[L.~Boulton]{Lyonell Boulton$^1$}
\address{$^1$Department of Mathematics and Maxwell Institute for Mathematical
Sciences, Heriot-Watt University, Edinburgh
EH14 4AS, United Kingdom}
\email{L.Boulton@hw.ac.uk}
\author[M.~P.~Garcia del Moral]{Maria Pilar Garcia del Moral$^2$}
\address{$^2$Departamento de F{\'\i}sica, Universidad de Oviedo, Avda Calvo Sotelo 18. 33007 Oviedo, Spain}
\email{garciamormaria@uniovi.es}
\author[A.~Restuccia]{Alvaro Restuccia$^3$}
\address{$^3$Departamento de F{\'\i}sica, Universidad Sim{\'o}n Bol{\'\i}var, Apartado 89000, Caracas, Venezuela.}
\address{$^3$ Departamento de F{\'\i}sica, Universidad de Oviedo, Avda Calvo Sotelo 18. 33007 Oviedo, Spain}
\email{arestu@usb.ve}
\begin{document}

\maketitle


\begin{abstract}
We formulate a general sufficiency criterion for discreteness of the spectrum of both supersymmmetric and non-su\-persymmetric theories with a fermionic contribution.
This criterion allows an analysis of  Hamiltonians in complete form rather than just their semiclassical limits. In such a framework we examine spectral properties of various (1+0) matrix models. We consider the BMN model of M-theory compactified on a maximally supersymmetric pp-wave background,
different regularizations of the supermembrane with central charges and a
non-supersymmetric model comprising a bound state of $N$ D2 with $m$ D0. While the first two examples have a purely discrete spectrum, the latter has a continuous spectrum with a lower end given in terms of the monopole charge.
\end{abstract}

\tableofcontents


\section{Introduction} \label{section1}
Supersymmetric quantum mechanics models have been used in the analysis of non-perturbative String Theory and in the context of M-theory \cite{witten1, townsend}. The $U(N)$ regularization of the 11D supermembrane \cite{bst} was introduced in \cite{hoppe,dwhn, dwmn}. The resulting action is the so called $(1+0)$ SYM theory. In \cite{dwln} it was established that this model has a continuous spectrum comprising the interval $[0,+\infty)$. The interpretation of this theory in terms of D0-branes was introduced in \cite{townsend}. The same $(1+0)$ action was employed in the formulation of the so-called Matrix Model Conjecture \cite{BFSS}. The continuity of the spectrum turns out to be an essential ingredient in this interpretation. The $(1+0)$ SYM action was first obtained in \cite{halpern,flume,rittenberg} in the context of supersymmetric quantum models
unrelated to M-theory. For other nonperturbative bosonic spectral analysis of  inside the context of regularized p-branes see \cite{amilcar3} for the M5-brane case and \cite{gmmnpr} for the regularized ABJ/M constructions for Super Chern-Simons-Matter theories.

Semiclassical analysis is a fundamental tool in the study of physical systems. However, in general, semiclassical limits are not sufficient to characterize many of the properties of the system at high energies. This motivates the use of full nonperturbative approaches. We illustrate this assertion in the context of supersymmetric matrix models. Consider the following 2-dimensional toy Hamiltonian further discussed in
Section~\ref{subsection63}. The example shows that the existence of mass terms does not generally guarantee discreteness of the spectrum beyond the semiclassical approximation. Let
 \begin{equation}\label{matrix222}
H=\begin{pmatrix} -\Delta+V_B(x,y)
 & x+iy+i \\
x-iy-i&-\Delta+V_B(x,y)
\end{pmatrix}
\end{equation}
where $V_B(x,y)=x^2(y+1)^2+y^2$. The model has no flat directions with zero potential and its semiclassical approximation has a discrete spectrum. On the other hand, however, $H$ has a non-empty continuous spectrum comprising the interval $[1,+\infty)$. Moreover, it also has a bound state $\lambda\approx 0.81419$ below the bottom of the essential spectrum. This demonstrates that, contrary to a common believe in SUSY, properties of the semiclassical limit can be substantially different from those of the actual exact theory at high energies.

The spectrum at high energies encodes information about the higher order interacting terms beyond the semiclassical approximation. The first few bound states provide information about the potential in neighborhoods of the origin, while the asymptotic structure of the spectrum at infinity is related to the behavior of the potential at large distances in the configuration space.

A self-adjoint Hamiltonian has a purely discrete spectrum with accumulation at infinity, if and only if its resolvent is compact. Mathematically this is an amenable property as far as the study of the high energy eigenvalues is concerned. On the one hand, this guarantees the existence of a complete set of eigenfunctions, which can be used to decompose the action of the operator in low/high frequency expansions. On the other hand, the study of eigenvalue asymptotics for the resolvent (or the corresponding heat kernel \cite{br}) in the vicinity of the origin, can be carried out by means of the Lidskii's theorem. None of this typically extends, if the Hamiltonian has a non-empty essential spectrum. In general the spectral theorem guarantees that any self-adjoint Hamiltonian with a non-empty essential spectrum can also be ``diagonalized'' in a generic sense. However, apart from a few canonical cases, properties of the corresponding spectral subspaces cannot be determined in a systematic manner.

\subsection{Aims and scopes of the present paper} \label{subsection11}
In Lemma~\ref{nuevo_lemma} below we establish a general sufficiency criterion for discreteness of the spectrum which is applicable to matrix models. This criterion is usable in models satisfying SUSY conditions or otherwise. A similar statement was already found in a more specialized context in \cite{bgmr}. As it turns, the methods of \cite{bgmr} can  be generalized in great manner and they can be implemented in a variety of other contexts.

The main idea behind the approach currently presented, is to  compare the behavior of the bosonic matrix eigenvalues of the theory with those of the fermionic contribution in every direction of the configuration space.
Some matrix models do satisfy the sufficiency conditions below and they automatically have discrete supersymmetric spectrum with finite multiplicity.
Some of them do not satisfy the criterion and in this case other techniques are require for analyzing the spectrum.

Once we have formulated the main mathematical tool in Section~\ref{section2}, we examine three benchmark models of current interest in sections~\ref{section3}-\ref{section5}. The one considered in Section~\ref{section3} corresponds to the discrete light cone quantization of M-theory on the maximal SUSY pp-waves background of D=11 supergravity \cite{bmn}, called the BMN model. We show that it satisfies the conditions of Lemma~\ref{nuevo_lemma}.

The BMN model has also been derived from the supermembrane on a background \cite{dsjvr} and its semiclassical limit has been examined in \cite{shimada}, see also classical solutions in \cite{hoppe2}. Our nonperturbative results show that the spectrum is discrete with finite multiplicity not only around the vacua, but also in the whole positive real line. Accumulation only occurs at infinity and the resolvent is compact. We should highlight that the results established below only cover the case of a finite $N$. As it turns, the bounds we have found diverges in the large $N$ limit. We stress however, that these bounds are not sharp, so the possibility of an extension to the latter case still is not completely excluded in this approach.

In Section~\ref{section4} we examine a model of supermembrane which was already considered in \cite{bgmr}, the supermembrane with central charges \cite{mrt},\cite{mor}. This is a well-defined sector of the full supermembrane theory whose regularized versions (top-down and $SU(N)$) have purely discrete spectrum. The supermembrane was initially thought to play an analogous role to the string in M-theory,  as it was thought to be a fundamental object in the sense that its transverse excitations could eventually be associated to different particles. As a consequence of the results found in \cite{dwln}, beyond the semiclassical approximation analyzed in \cite{stelle}, and due to its spectral properties the 11D supermembrane was considered as a second quantized object, and in this sense only be defined macroscopically. The compactified supermembrane was further studied in \cite{dwpp} and also in \cite{gmr}, showing that the classical instabilities like string-like spikes could not be ruled out simply by means of the compactification process. We showed in \cite{bgmr}, and now we provide additional evidence, that this argumentation does not carry out to the case of the regularized supermembrane with central charge.

Although a rigurous proof is still lacking, we provide additional evidences to support the conjecture that the spectrum of the theory  in the continuum limit would remain discrete. On one hand the bosonic potential in the continuum  \cite{bgmr2} has the same type of quadratic lower bound as the regularized model. Moreover, there is a well-defined convergence of the regularized eigenvalues to the continuum theory eigenvalues in the semiclassical regime. Furthermore, the regularized bound remains finite in the large $N$ limit. If all these assumptions hold true, the supermembrane with central charges could admit an interpretation as a first quantized theory. In sections~\ref{subsection44} and \ref{subsection46},
we illustrate two concrete regularizations of this model in full detail.

In Section~\ref{section5} we examine a matrix model for the bound state of $(ND2,mD0)$ \cite{witten} which does not satisfy the requirements of Lemma~\ref{nuevo_lemma}. In this case we show (Section~\ref{subsection62}) that the corresponding Hamiltonian has a non-empty continuous spectrum. The D2-D0 model is constructed by including monopoles with a characteristic number $m$ in the $D2$ $U(N)$ mode. For any $N$ and $m$ irrespectively of whether they are prime or not, the spectrum of the model is continuous and it is shifted by the monopole contributions.

Section~\ref{section6}, the final section, is devoted to models with non-empty essential spectrum. We show how the variational approach of Section~\ref{section1} is not only of theoretical importance, but also how it is highly relevant in the numerical study of properties of the Hamiltonian even in the presence of a continuous spectrum.  We discuss spectral approximation properties on toy models with non-empty essential spectrum. In particular we consider the embedding of eigenvalues in the continuous spectrum, including numerical estimations singular Weyl sequences and ground wave functions.

\subsection{Background notation} Below we consider the Hilbert space of states of the form $L^2(\RR^N)\otimes \CC^d$ where $N$ and $d$ are large enough. Here $d$ corresponds to the dimension of the Fermi Fock space.
When sufficiently clear from the context, we will only write $L^2 \equiv L^2(\RR^N)\otimes \CC^d$. The corresponding norm in these spaces will be denoted by
$\|\cdot\|$ and the mean integration by
$\langle \cdot \rangle$. In the Euclidean space, we will denote the norm of vectors by
$|\cdot|$ and the inner product will be either left explicit or some times
will be denoted with a single ``dot''. In this notation, $\|u\|^2=\langle |u|^2 \rangle$
for $u\in L^2$. Different symbols, depending on the context, will be employed to denote the variables of the configuration space.
Hamiltonians are self-adjoint operators in this configuration space. They will always be bounded from below, so their domain can be rigorously defined via the classical
Friedrichs extension process.

\section{A variational approach for regularized Hamiltonians} \label{section2}

We firstly consider a general framework which enables a
variational characterization of the spectrum of  Hamiltonians of
matrix model theories,
irrespectively of their supersymmetric properties in the
regularized regime.
The core idea behind this technique has already
been discussed in \cite{bgmr}. It can be regarded as a natural
extension of the  classical result establishing that the spectrum of a
Schr\"odinger operator will be discrete, if the potential term is
bounded from below and it blows up in every direction at infinity,
see \cite[Theorem XIII.16]{rs4}.

Suppose that in $L^2(\mathbb{R}^{N})\otimes \mathbb{C}^{d}$, the
operator realization of the Hamiltonian has the form
\[
   H= P^2 + V(Q), \quad Q\in \mathbb{R}^N
\]
where $V$ is a hermitean  $d\times d$ matrix whose entries are continuous
functions of the configuration variables $Q$.
 Assume additionally that $V(Q)$ is bounded from below by $b$, that is
 \begin{equation} \label{a}
    V(Q)w\cdot w \geq b |w|^2 \qquad\qquad  w\in \mathbb{C}^d
\end{equation}
where $b\in \RR$ is a constant. Then $H$ is bounded from below by $b$ and the spectrum of $H$ does not intersect the interval $(-\infty,b)$.

The following abstract criterion establishes conditions guaranteeing that the spectrum of $H$ is purely discrete.
An alternative proof of Lemma~\ref{criterion} can be found in \cite{bgmr}.

\begin{lemma} \label{criterion} Let $v_k(Q)$ be the
eigenvalues of the $d\times d$ matrix $V(Q)$.
If all $v_{k}(Q)\to +\infty$ as $|Q|\to +\infty$, then the spectrum
of $H$ consists of a set of isolated
eigenvalues of finite multiplicity accumulating
at $+\infty$.
\end{lemma}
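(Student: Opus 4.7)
The plan is to reduce the matrix-valued problem to a scalar Schr\"odinger problem by dominating $H$ from below, in the sense of quadratic forms, by an operator of the form $(-\Delta+v_{\min}(Q))\otimes I_d$, where $v_{\min}(Q)$ denotes the smallest eigenvalue of the Hermitian matrix $V(Q)$, and then invoking the classical confining-potential criterion \cite[Theorem XIII.16]{rs4} together with the min-max principle.

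First I would set $v_{\min}(Q):=\min_{1\le k\le d} v_k(Q)$. Since the entries of $V(Q)$ depend continuously on $Q$ and the eigenvalues of a Hermitian matrix depend continuously on its entries, $v_{\min}$ is continuous on $\mathbb{R}^N$, bounded below by $b$, and by hypothesis tends to $+\infty$ as $|Q|\to \infty$. Hence the scalar Schr\"odinger operator $h:=-\Delta+v_{\min}(Q)$ on $L^2(\mathbb{R}^N)$ has purely discrete spectrum accumulating only at $+\infty$ by the cited Reed--Simon theorem, and consequently $h\otimes I_d$ on $L^2(\mathbb{R}^N)\otimes\mathbb{C}^d$ has the same spectral type, each eigenvalue of $h$ occurring with multiplicity $d$. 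Since $v_{\min}(Q)$ is by definition the smallest eigenvalue of $V(Q)$, one has the pointwise estimate $V(Q)w\cdot w\ge v_{\min}(Q)|w|^2$ for all $w\in\mathbb{C}^d$; integrating against $\psi$ in the common form core $C_c^\infty(\mathbb{R}^N)\otimes\mathbb{C}^d$ yields
\[
\langle H\psi,\psi\rangle \;=\; \int_{\mathbb{R}^N}|\nabla\psi|^2\,dQ+\langle V(Q)\psi,\psi\rangle \;\ge\; \langle (h\otimes I_d)\psi,\psi\rangle.
\]

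Denoting by $\mu_n(\cdot)$ the $n$-th min-max value, this form inequality and the min-max principle give $\mu_n(H)\ge \mu_n(h\otimes I_d)\to +\infty$. By the variational characterisation of the bottom of the essential spectrum this forces $\sigma_{\mathrm{ess}}(H)=\emptyset$, so the $\mu_n(H)$ are genuine eigenvalues of finite multiplicity accumulating only at $+\infty$, which is exactly the conclusion claimed. The delicate point I anticipate is confirming that the form inequality extends from the common core $C_c^\infty(\mathbb{R}^N)\otimes\mathbb{C}^d$ to the full form domain of $h\otimes I_d$ so that the min-max comparison is genuinely legitimate; this is routine once one fixes the Friedrichs extensions prescribed in the background notation, but it is the single step where the lower bound \eqref{a} on $V$ is really used, rather than the pointwise eigenvalue inequality alone.
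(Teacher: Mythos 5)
Your proof is correct, and it takes a cleaner, more modular route than the one in the paper, though the two are close cousins. You reduce the matrix problem to a scalar one by forming $v_{\min}(Q)=\min_k v_k(Q)$, dominate $H$ in form sense by $(-\Delta+v_{\min})\otimes I_d$, and then invoke the classical confining-potential theorem \cite[Theorem XIII.16]{rs4} directly, finishing with the min-max comparison. The paper instead keeps the matrix potential throughout and gives a self-contained cutoff argument: for each threshold $\tilde b$ it chooses a ball outside of which $V(Q)w\cdot w\ge\tilde b|w|^2$, compares $V$ against $\tilde b+W$ where $W$ is a compactly supported step potential, and then uses Weyl's theorem (essential spectrum of $P^2+W$ equals $[0,\infty)$) to conclude $\lambda_m(H)\ge\tilde b-1$ for $m$ large, letting $\tilde b\to\infty$. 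In effect the paper reproduces the proof of Theorem XIII.16 inside the matrix setting, whereas you factor out the $\otimes I_d$ and cite it; both hinge on exactly the same pointwise domination and the min-max principle. The form-domain extension you flag at the end is indeed the one non-cosmetic technical step, and as you note it is routine once the operators are fixed as Friedrichs extensions, as prescribed in the paper's background notation.
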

\begin{proof}
Without loss of generality we assume that $v_k(Q)\geq 0$, otherwise we
just have to shift
$H$ by a constant in the obvious manner.
The assumptions  imply that $V(Q)$ satisfies (\ref{a}). Since $H$ is bounded from below, one can apply the
Raleigh-Ritz principle to find the eigenvalues below the essential
spectrum. Let
\[
  \lambda_m(H):=\inf \left(\sup_{\Phi\in L}
  \frac{\langle H\Phi,\Phi\rangle}{\|\Phi\|^2}\right)
\]
where the infimum is taken over all $m$-dimensional subspaces
$L$ of the domain of $H$.  Note that the quotient on the right hand side
is a Rayleigh quotient in the Hilbert space $L^2(\RR^N)\otimes \CC^d$.   Then the bottom of the
essential spectrum of $H$ is $\lim_{m\to +\infty}\lambda_m(H)$. If this limit is $+\infty$, then the spectrum of $H$ is
discrete.

The hypothesis of the lemma is equivalent to the following
condition: for all  $\tilde{b}>0$, there exists a ball $\mathcal{B}\subset \mathbb{R}^N$ (of possibly
very large radius) such that
\[
 V(Q)w\cdot w \geq \tilde{b}|w|^2,\qquad \mathrm{all}\ Q\not \in \mathcal{B}.
\]

Let
\[
 W(Q):=\left\{ \begin{array}{ll} -\tilde{b} & Q\in \mathcal{B} \\ 0 & Q
 \not \in \mathcal{B} \end{array} \right. .
\]

Then for all $\Phi$ in the domain of $H$,
\[
  V(Q)\Phi(Q)\cdot \Phi(Q) \geq \tilde{b}\Phi(Q) \cdot \Phi(Q) +
  W(Q)\Phi(Q)\cdot \Phi(Q)
\]
for almost all $Q\in \RR^d$,
so that
\[
  \langle V\Phi,\Phi\rangle \geq \langle(\tilde{b}+W)\Phi,\Phi\rangle.
\]
Thus
\[
 \lambda_m(H) \geq \tilde{b}+ \lambda_m(P^2 + W)
\]
for all $m=1,2,\ldots$.

Since $W(Q)$ is a bounded potential with compact support, by
Weyl's theorem, the essential spectrum of $P^2+W(Q)$ is
$[0,+\infty)$. Therefore by Raleigh-Ritz criterion, there exists
$\tilde{M}>0$ such that
\[
   \lambda_m(P^2+W(Q)) \geq -1
\]
for all $m>\tilde{M}$. Thus, $\lambda_m(H)\geq \tilde{b}-1$ for all $m$ large enough.
Since we can take $\tilde{b}$
very large, necessarily $\lambda_m\to +\infty$ as $m$ increases.
\end{proof}

As a consequence of this lemma, we can establish the main mathematical contribution of the present paper.

\begin{lemma} \label{nuevo_lemma}
Let $V_B$ be a continuous bosonic potential of the configuration space.
Let $V_F$ be a fermionic matrix potential with continuous entries $v_{ij}$
of the configuration space.
Suppose that there exist constants
$b_B,b_F,R_0,p_B,p_F>0$ independent of $Q\in\RR^N$
satisfying the following conditions
\[
   V_B \geq b_B |Q|^{p_B} \qquad \text{and} \qquad  |v_{ij}|\leq b_F |Q|^{p_F}
\]
for all $|Q|>R_0$. If $p_B>p_F$, then the Hamiltonian
$H=P^2+ V(Q)$ of the quantum system associated to $V=V_BI+V_F$
has spectrum consisting exclusively
of isolated eigenvalues of finite multiplicity, semi-bounded below and accumulating at $+\infty$.\end{lemma}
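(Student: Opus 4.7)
The plan is to reduce the statement directly to Lemma~\ref{criterion}. Since $V_B(Q)I$ is a scalar multiple of the identity, the eigenvalues of $V(Q)=V_B(Q)I+V_F(Q)$ are exactly
\[
   v_k(Q)=V_B(Q)+\mu_k(Q),\qquad k=1,\dots,d,
\]
where $\mu_1(Q),\dots,\mu_d(Q)$ are the (real) eigenvalues of the hermitean matrix $V_F(Q)$. So everything comes down to bounding $|\mu_k(Q)|$ from above in terms of the entrywise data $|v_{ij}(Q)|\leq b_F|Q|^{p_F}$.

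For that step I would invoke a standard crude spectral bound for hermitean matrices, for instance the estimate $\|V_F(Q)\|_{\mathrm{op}}\leq \|V_F(Q)\|_F$ by the Frobenius norm, or equivalently Ger\v sgorin's disc theorem. Either one produces a dimensional constant $C=C(d)$ such that $|\mu_k(Q)|\leq C\, b_F\,|Q|^{p_F}$ whenever $|Q|>R_0$. Combining this with the bosonic lower bound yields
\[
   v_k(Q)\geq b_B|Q|^{p_B}-C\,b_F\,|Q|^{p_F}\qquad\text{for all }|Q|>R_0,
\]
and because $p_B>p_F$ the right hand side tends to $+\infty$ as $|Q|\to+\infty$. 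This is precisely the hypothesis demanded by Lemma~\ref{criterion}.

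The last thing to check is semi-boundedness, so that the Friedrichs construction described in Section~\ref{subsection11} gives an honest self-adjoint operator on $L^2(\RR^N)\otimes\CC^d$. The previous display already provides a uniform lower bound for all $v_k(Q)$ on $\{|Q|>R_0\}$, while continuity of $V_B$ and of the entries $v_{ij}$ on the compact ball $\{|Q|\leq R_0\}$ supplies a (possibly negative) uniform lower bound there. Hence $V(Q)$ satisfies condition (\ref{a}) for some $b\in\RR$, and after shifting $H$ by a constant if needed one can apply Lemma~\ref{criterion} verbatim to obtain the stated discreteness, finite multiplicity, and accumulation at $+\infty$. I do not anticipate a genuine obstacle: the only nontrivial input is the elementary entrywise-to-spectral estimate for $V_F(Q)$, and since $p_B>p_F$ is strict there is ample room to absorb the dimensional constant $C(d)$.
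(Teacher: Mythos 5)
Your proof is correct and follows essentially the same route as the paper: both reduce to Lemma~\ref{criterion} by observing that the eigenvalues of $V=V_BI+V_F$ are $V_B(Q)+\mu_k(Q)$ with $\mu_k$ the eigenvalues of $V_F(Q)$, then bound $|\mu_k(Q)|$ by a constant multiple of $|Q|^{p_F}$. The only difference is cosmetic — you cite Ger\v sgorin or the Frobenius-norm bound explicitly, whereas the paper just asserts that the eigenvalues of $V_F(Q)/|Q|^{p_F}$ stay bounded; you also spell out the compact-ball argument for semi-boundedness that the paper compresses into one sentence.
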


\begin{proof}
We apply Lemma~\ref{criterion}.  The eigenvalues of the matrix $V_{B}(Q)+V_F(Q)$
are the $\lambda=\lambda(Q)\in \mathbb{R}$ such that
\begin{equation*}
\det [(\lambda-V_{B}(Q))I-V_F(Q)]=0.
\end{equation*}
Let $\widehat{\lambda}(Q)$  be any of the eigenvalues of
$\frac{V_F(Q)}{|Q|^{p_F}}$. The imposed hypothesis ensures that
$|\widehat{\lambda}(Q)|$
remains bounded for all $|Q|\geq R_0$. As
\begin{equation*}
\lambda(Q)=V_{B}(Q)+|Q|^{p_F}\widehat{\lambda}(Q)\ge b_B |Q|^{p_B}-
|\widehat{\lambda}(Q)||Q|^{p_F},
\end{equation*}
for $p_B>p_F$, $\lambda(Q)\to +\infty$ whenever $|Q|\to+\infty$. Note
that $V$ is continuous, hence it is automatically bounded from below.
\end{proof}

For further applications of these results see \cite{bgmr2}.
Also note that
the continuity assumptions in the above
for the potentials may be relaxed to measurable and bounded from below.
Nevertheless the stronger condition of
on continuity will serve our purposes below.


\section{The BMN matrix model} \label{section3}
The  matrix model for the Discrete Light Cone Quantization (DLCQ)
of M-theory on the maximally supersymmetric pp-waves background of eleven dimensional supergravity examined in \cite{bmn}\footnote{We thank J.~Maldacena for clarifying some details related to the construction of this model.}
fits in well with the abstract framework of the previous section. This is a well-known model in the literature and it admits an interpretation in terms of coincident gravitons. Extensions have been intensively studied also in spaces with less supersymmetry. See also \cite{yolanda}.

The dynamics of this theory is described by an $U(N)$ matrix model, which in our notation reads
\begin{equation}
\begin{aligned}
&L_{BMN}=T-V_B-V_F\\
&V_B=\Tr \left[
\frac{\mu^2}{36R}\sum_{i=1,2,3}(X^i)^2+\frac{\mu^2}{144R}\sum_{i=4}^9
(X^i)^2+\frac{i\mu}{3}\sum_{i,j,k=1}^3\epsilon_{ijk}X^iX^jX^k-\frac{R}{2}\sum_{i,j=1}^9[X^i,X^j]^2\right]\\
\nonumber
&V_F=\Tr \left[\frac{\mu}{4}\Psi^T\gamma_{123}\Psi-2iR\sum_{i=1}^{9}(\Psi^T\gamma_i[\Psi,X^i])\right]
\end{aligned}\end{equation}
The spinorial fields are represented by hermitean matrices in the Fermi Fock space.

To characterize the spectrum of the Hamiltonian
\[H=R \Tr\left[\frac{1}{2}P_i^2+\frac{1}{R}(V_B+V_F)\right],\]
 we split the bosonic potential as
\begin{equation}
\begin{aligned}
&V_{B1}=\Tr\left[\frac{\mu^2}{36R}\sum_{i=1,2,3}(X^i)^2
+\frac{i\mu}{3}\sum_{i,j,k=1}^3\epsilon_{ijk}X^iX^jX^k-\frac{R}{2}\sum_{i,j=1}^3[X^i,X^j]^2\right]\\
\nonumber &V_{B2}=\Tr\left[\frac{\mu^2}{144R}\sum_{i=4}^9
(X^i)^2-\frac{R}{2}\sum_{i,j=4}^9[X^i,X^j]^2 \right]\\ \nonumber
&V_{B3}=\Tr\left[-R\sum_{i=1,2,3;j=4,\dots,9}[X^i,X^j]^2\right]
\end{aligned}\end{equation}
The quartic contribution to the potential
 with an overall minus sign is positive, since the
  commutator is antihermitean. The coordinates
    $X^i$, for $i=4,\dots,9$, only contribute quadratically and quartically to the Lagrangian through the potentials $V_{B2}+V_{B3}$.
Therefore, they satisfy the bound of Lemma~\ref{nuevo_lemma}, with $p_B=2$ and $p_F=1$. Thus, the analysis of the
    bosonic potential may be focus in the first three coordinates.

    Let us concentrate on the $V_{B1}$ contribution.
This potential may be re-written as a perfect square,
\[
V_{B1}=\frac{1}{2}\Tr\left[ -i\sqrt{R}[X^{i},X^{j}]-\frac{\mu}{6\sqrt{R}}\epsilon^{ijk}X^k\right]^2,
\]
so it is positive definite $V_{B1}\ge 0$. It vanishes at the variety determined by the condition
\begin{equation}\label{solo}
[X^i,X^j]=\frac{i\mu}{6R}\epsilon^{ijk}X^k.
\end{equation}
In turns, this condition corresponds to a fuzzy sphere, \cite{bmn}, along the
directions $1,\,2$ and $3$, so there are no flat directions with zero
potential.

Let us now examine the potential away from the minimal set
in the configuration space. To characterize completely the system let  $\rho^2=\sum_{i=1}^3
Tr(X^{i})^2$  and $\varphi\equiv \frac{X}{\rho}$ be defined on a unitary
hypersphere $S^{3N^2-1}$. Let $V_{B1}=\frac{\mu^2}{36R}\rho^2 P(\rho,\varphi)$
where the polynomial
\[
P(\rho,\varphi)=1+2b\rho+a^2\rho^2= \left(a\rho+\frac{b}{a}\right)^2+\left(1-\left(\frac{b}{a}\right)^2\right)
\]
for
\begin{equation} \begin{aligned} b=\Tr \left(
\frac{i3R}{\mu}[\varphi^i,\varphi^j]\varphi^k\epsilon^{ijk}\right)\qquad\text{and}\qquad
\nonumber  a^2=-\frac{18R^2}{\mu^2}\Tr[\varphi^i,\varphi^j].
\end{aligned}
\end{equation}
Since $\frac{b}{|a|}$ is the inner product of two unitary vectors,
 $\vert b\vert\leq |a|$. Consequently,
\[1-\left(\frac{b}{a}\right)^2\geq 0.\]

\begin{theorem} \label{lemma_D0D2}
Let $R_0>\frac{\mu}{3R}\sqrt{C_2(N)N}$ where $C_2(N)=\frac{N^2-1}{4}$ and $\mu,R$ are different from zero.
Then $P(\rho,\varphi)>C>0$ for all $\rho>R_0$ and $\varphi\in S^{3N^2-1}$.
\end{theorem}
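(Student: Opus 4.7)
The plan is to exploit the decomposition $P(\rho,\varphi)=(a\rho+b/a)^2+(1-b^2/a^2)$ for $a>0$, together with the observation that equality $|b|=a$ in the Cauchy--Schwarz bound characterizes precisely the fuzzy-sphere (SU(2)) configurations of (\ref{solo}). Any $X$ on that locus has the form $X=(\mu/6R)L$ for an embedded SU(2) representation $L$, so $|X|=(\mu/6R)\|L\|\leq\mu\sqrt{C_2(N)N}/(6R)$; the hypothesis $R_0>\mu\sqrt{C_2(N)N}/(3R)$ thus ensures that the SU(2) locus is contained in the open ball of radius $R_0/2$, and in particular that $a_{\min}^{SU(2)}:=6R/(\mu\sqrt{C_2(N)N})$ satisfies $a_{\min}^{SU(2)}R_0>2$.

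Now the easy subcases: if $a(\varphi)=0$ then $b(\varphi)=0$ (by $|b|\leq a$) and $P\equiv 1$, while if $a>0$ and $b\geq 0$ then $P(\rho,\varphi)\geq 1$ trivially. Assume $a>0$ and $b<0$; the parabola $\rho\mapsto P(\rho,\varphi)$ attains its minimum on $[0,\infty)$ at $\rho^*=|b|/a^2\leq 1/a$, so
\[
   \min_{\rho\geq R_0}P(\rho,\varphi)=\begin{cases}P(R_0,\varphi) & \text{if }\rho^*\leq R_0,\\ 1-b^2/a^2 & \text{if }\rho^*>R_0,\end{cases}
\]
and in the second case the bound $|b|\leq a$ forces $a<1/R_0$. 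The map $\varphi\mapsto P(R_0,\varphi)$ is continuous and strictly positive on the compact sphere $S^{3N^2-1}$ (since $|R_0\varphi|=R_0$ exceeds the radius of the SU(2) locus), so compactness gives $P(R_0,\cdot)\geq c_1>0$.

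It remains to bound $1-b^2/a^2$ from below uniformly on the compact set $\mathcal K=\{\varphi:a(\varphi)\leq 1/R_0\}$. Because $1/R_0<a_{\min}^{SU(2)}$, $\mathcal K$ is disjoint from the SU(2) locus, so $|b|<a$ strictly and $1-b^2/a^2>0$ pointwise on $\mathcal K\cap\{a>0\}$. The main obstacle is to prevent this ratio from dropping to zero along sequences approaching the commuting boundary $\{a=0\}\subset\mathcal K$. A Taylor expansion $\varphi=\varphi^*+\epsilon\eta$ around a commuting $\varphi^*$ gives $[\varphi^i,\varphi^j]=O(\epsilon)$ and hence $a=O(\epsilon)$, whereas the leading $O(\epsilon)$ contribution to $b$ vanishes because
\[
   \epsilon^{ijk}\Tr([\varphi^{*i},\eta^j]\varphi^{*k})=\epsilon^{ijk}\Tr(\eta^j[\varphi^{*k},\varphi^{*i}])=0
\]
by commutativity of $\varphi^*$; therefore $b=O(\epsilon^2)$, $b/a\to 0$, and $1-b^2/a^2\to 1$. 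In the doubly degenerate case where the leading $\epsilon$-term in $a$ itself vanishes, an analogous trace identity $\epsilon^{ijk}\Tr([\eta^i,\eta^j]\varphi^{*k})=0$, obtained by a block calculation in a common eigenbasis of $\varphi^*$, ensures that $b$ still decays faster than $a$. Consequently $1-b^2/a^2$ extends continuously to $\mathcal K$ with value $1$ on $\{a=0\}$, is strictly positive, and by compactness is bounded below by some $c_2>0$. Taking $C=\min(1,c_1,c_2)$ completes the proof.
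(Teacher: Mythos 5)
Your proof is essentially the same in substance as the paper's: both rest on the decomposition $P(\rho,\varphi)=(a\rho+b/a)^2+(1-b^2/a^2)$, the Cauchy--Schwarz bound $|b|\le a$, the identification of the equality locus $|b|=a$ with the $SU(2)$ fuzzy-sphere configurations of \eqref{solo}, the explicit computation $a_{SU(2)}=6R/(\mu\sqrt{C_2(N)N})$ that shows the hypothesis on $R_0$ places the ball of radius $R_0$ beyond that locus, and a compactness argument on the sphere. What you do differently is logistical: you run a direct case analysis (on $a=0$, $b\ge 0$, $\rho^*\lessgtr R_0$, and the level set $\mathcal K=\{a\le 1/R_0\}$), whereas the paper argues by contradiction, extracting a convergent subsequence $\varphi_l\to\hat\varphi$ with $b^2/a^2\to 1$ and then contradicting $\rho_l>R_0$. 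The two organizations buy roughly the same thing.

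The one place you genuinely go beyond the paper is in trying to control $1-b^2/a^2$ near the commuting set $\{a=0\}$. The paper's own justification --- that ``$b^2/a^2$ is continuous in $\varphi$ as it is the inner product of two unitary vectors'' --- is not actually a valid argument at $a=0$, because the normalized vector $W_2/\|W_2\|$ is undefined there; so the paper itself has a lacuna at exactly this point. Your Taylor expansion is an honest attempt to fill it, and the first-order cancellation $\epsilon^{ijk}\Tr([\varphi^{*i},\eta^j]\varphi^{*k})=0$ is correct. Your ``doubly degenerate'' trace identity is, perhaps surprisingly, also correct: writing things in the common eigenbasis, the constraint $[\varphi^{*i},\eta^j]+[\eta^i,\varphi^{*j}]=0$ forces $(\eta^i_{ab})_i\parallel(\lambda^i_a-\lambda^i_b)_i$ whenever $\lambda_a\ne\lambda_b$, which kills those blocks by symmetry of $\Delta^i_{ab}\Delta^j_{ab}$ under $\epsilon^{ijk}$; the degenerate blocks $\lambda_a=\lambda_b$ die after symmetrizing $a\leftrightarrow b$ and using hermiticity. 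However, the argument as written is still not airtight for the purpose you need it for: you treat $\epsilon$-parametrized rays $\varphi^*+\epsilon\eta$ rather than arbitrary sequences approaching $\{a=0\}$, and the induction on degeneracy order stops after two steps without a uniform bound, so it does not yet establish that $1-b^2/a^2$ extends continuously (with strictly positive infimum) to the whole compact set $\mathcal K$. To close this cleanly you would need either a uniform quantitative version (showing $|b|/a\le C a$ near $\{a=0\}$, with $C$ depending only on $N$) or a separate ``Case B'' style argument as the paper carries out in Theorem~\ref{bound_V_finite_case} for the supermembrane.

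Two small additional remarks. First, your assertion that ``the map $\varphi\mapsto P(R_0,\varphi)$ is continuous and strictly positive on $S^{3N^2-1}$'' needs the same limiting control near $\{a=0\}$: $P(R_0,\varphi)=1+2bR_0+a^2R_0^2$ is a fine continuous function, and it equals $1>0$ at $a=0$, so this part is unproblematic --- but do note it is $P(R_0,\cdot)$ itself, not the ratio, that you invoke here. Second, it would be worth stating that your constant $C$ depends only on $N$, $\mu$, $R$ (via $R_0$), matching the theorem statement.
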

\begin{proof}
Assume that there exists a sequence $(\rho_l,\varphi_l)$ such
that $P(\rho_l,\varphi_l)\to 0$ and $\rho_l>R_0$ for large enough $l$. Since $\varphi$ takes the value on
a compact set, there exists a subsequence such that
$\varphi_{li}\to\hat{\varphi}$ when $l_i\to+\infty$. Suppose without loss of generality that $(\rho_l,\varphi_l)$ is such a subsequence. From the expression for $P(\rho,\varphi)$ it follows that
\begin{equation*}
\begin{aligned}
&\left(a(\varphi_l)\rho_l
+\frac{b(\varphi_l)}{a(\varphi_l)}\right)^2\to 0 \qquad \text{and} \qquad
\left(1-\frac{b^2(\varphi_l)}{a^2(\varphi_l)}\right)\to 0.
\end{aligned}
\end{equation*}
Since $\frac{b^2}{a^2}$ is continuous in
$\varphi$ as it is the inner product of two unitary vectors which are in turns continuous in $\varphi$,  we must necessarily have
\begin{equation*}\label{lim}
\left(\frac{b(\hat{\varphi})}{a(\hat{\varphi})}\right)^2=1.
\end{equation*}
This latter property only holds, if the two vectors involved are parallel but point towards opposite directions.
To see this, note that $\rho_l\to-\frac{b}{a^2}$, so that $b$ must be negative. Then
\[
\frac{6iR}{ a(\hat{\varphi})\mu}[\hat{\varphi}^i,\hat{\varphi}^j]=-\epsilon_{ijk}\hat{\varphi}^k.
\]
If we now compare with \eqref{solo}, note that here (by definition)
\begin{equation}\label{norm}\{\hat{\varphi}^k: \Tr(\hat{\varphi}^i)^2=1\}\end{equation}
and there is an extra factor $ a(\hat{\varphi})$.

Write $\hat{\varphi}=CJ^i$ where $C=\frac{a(\hat{\varphi})\mu}{6R}$. Then we get that $J^i$  are associated with the $SU(2)$ algebra satisfying $[J^i,J^j]=i\epsilon_{ijk}J^k$.
By virtue of (\ref{norm})
\[
1=\Tr(\hat{\varphi}^i)^2 =C^2 \Tr(J^i)^2 =C^2 C_2(N) N.
\]
Here $C_2(N)$ is the Casimir invariant. Then,
\[
a(\hat{\varphi})^2=\left(\frac{6R}{\mu}\right)^2\frac{1}{C_2(N)N}.
\]
 Since $\rho_l> R_0$ where
\[
R_0>\frac{2}{a(\hat{\varphi}_l)}=\frac{\mu\sqrt{C_2(N)N}}{3R}=\frac{\mu\sqrt{N^3-N}}{6R},\]
for $l$ large enough, we get
\begin{align*}
 P(\rho_l,\varphi_l) &\geq
a^2(\varphi_l)\left(
\rho_l+\frac{b(\varphi_l)}{a^2(\varphi_l)}\right)^2  \\
& \geq a^2(\varphi_l)\left(
R_0+\frac{b(\varphi_l)}{a^2(\varphi_l)}\right)^2 \\
& \geq 1.
\end{align*}
The latter is clearly a contradiction, so the validity of the theorem is ensured.
\end{proof}

In order to consider the supersymmetric contribution, we just have to realize that the fermionic contribution
 is linear in the bosonic variables, so it satisfies the assumptions of
 Lemma~\ref{nuevo_lemma}. Consequently the
 supersymmetric Hamiltonian of the BMN matrix model has a purely discrete spectrum with eigenvalues of finite multiplicity only accumulating at infinity. As we will see in the next section, this property is also shared with the
 supermembrane with central charges.

 We emphasize here that the spectrum is discrete  in the whole real line.
It should be noted however that, at present, there are not clear restrictions about the spectrum of the model in the large $N$ limit, as $R_0 \to+\infty$ when $N\to+\infty$. In principle, it might have a complicated continuous spectrum with the presence of gaps. The proof of the existence of a gap in this regime remains an interesting open question.

An important advantage of models with discrete spectrum over those with a non-empty essential spectrum, lies in the fact that the behavior of the Hamiltonian at high energies may be determine with precision from the heat kernel or the resolvent operator, by examining accumulation of the spectrum around the origin. These considerations will be discussed elsewhere.


\section{The supermembrane with central charges} \label{section4}
 The action of the supermembrane with central charges \cite{bgmr2}, with
base manifold a compact Riemann surface $\Sigma$ and Target Space $\Omega$
the product of a compact manifold and a Minkowski space-time, is
defined in term of maps: $\Sigma \longrightarrow \Omega$,
satisfying a certain topological restriction over $\Sigma$. This
restriction ensures that the corresponding maps are wrapped in a
canonical (irreducible) manner around the compact sector of
$\Omega$. In order to generate a nontrivial family of admissible maps,
this sector is not arbitrary but rather it is constrained by the
existence of a holomorphic immersion $\Sigma\longrightarrow
\Omega$.

In particular, let $\Sigma$ be a torus and
$\Omega=T^2\times M_9$ where $T^2=S^1\times S^1$ is the flat
torus. Let $X_r:\Sigma \longrightarrow T^2$ with $r=1,2$ and
$X^m:\Sigma\longrightarrow M_9$ with $m=3,\ldots,9.$ The
topological restriction is explicitly given in this case by the
condition
\begin{equation} \label{winding}
\epsilon_{rs}\int_{\Sigma}\ud X_r\wedge \ud X_s= n \operatorname{Area}(\Sigma)\ne
0.
\end{equation}
Note that the raising and lowering indices of the $X_r$ fields is
consistent  with the $\delta_{rs}$ metric of the Target Space. Here
$\epsilon$ is the Levi-Civita symbol and $n\in\mathbb{N}$ is a fix
constant of the
 model which corresponds to the winding number of the maps.
 The associated holomorphic immersion is defined in terms of the holomorphic one-form over $\Sigma$,
 which may be constructed in terms of a basis of harmonic one-forms over $\Sigma$ denoted by $\ud \widehat{X}_r$.
  The one-forms $\ud X_r$ satisfying \eqref{winding} for  $n=1$ always admit a decomposition
\[
\ud X_r=\ud \widehat{X}_r+\ud A_r
\]
where $\ud A_r$ are exact one-forms.
These one-forms are defined modulo constants on $\Sigma$. In turns
the degrees of
freedom within the sector are realized in terms of single-valued
fields $A_r$ over $\Sigma$.

The functional defining the action of the supermembrane with central
charges in the Light Cone Gauge (LCG)
is analogous to the corresponding functional considered in \cite{dwhn},
but defined in the target space $\Omega$
described above  and restricted
 by the topological constraint \eqref{winding}. See for example \cite{bellorin},\cite{joselen}.
 The corresponding Hamiltonian
 realizes as a Schr{\"o}dinger operator acting  on a dense domain of
$L^2(\RR^N)\otimes \CC^d$ for $N$ and $d$ sufficiently large and
the configuration space coordinates have a representation
$Q\in L^2(\Sigma)$ in terms of harmonic functions.


\subsection{A top-down truncation: the regularized states space} \label{subsection41}
The relevance of the regularized models we discuss in this section lies in the fact that their nonlinear bosonic potential converges in the $L^2$
norm to the bosonic potential of the supermembrane with central charges, when the dimension $2d$ of $\Rspan\{Y^A:{A\in \N}\}$  (see the definition below)
increases. We are not focusing on the symmetries of the regularized model but in its relation with the 1+2 field theory. These all have discrete spectrum with finite multiplicity and accumulation point at infinity for any $d$ (i.e. there is \underline{not} a continuous sector in the spectrum). To achieve this, we show that the bosonic potential satisfies the bound of Lemma~\ref{nuevo_lemma}
 for $p=2$, since the fermionic potential is linear on the configuration variables. Remarkably, for any regularized version of the fermionic potential with that property, the Hamiltonian will always have a purely discrete spectrum.

Let $\Sigma$ be as above and consider that $\ud \hat{X}_r$ for $r=1,2$ is a basis of harmonic
one-forms in $\Sigma$. Let
\[
     D_r=\frac{\epsilon^{ab}}{\sqrt{W}}(\partial_a\hat{X}_r)\partial_b
\]
where $\partial_a$ denotes differentiation with respect to $\sigma^a$
 and $W$ is the determinant of the induced metric defined by the minimal immersion $\widehat{X}_r$.
 We will denote the symplectic bracket associated to the
supermembrane with central charges by
\[
   \{B,C\}\equiv  \frac{\epsilon^{ab}}{\sqrt{W}} \partial_a B\partial_b C=
   \epsilon^{rs}  D_r B D_s C.
\]
 Let
\[
    \langle B \rangle = \int_\Sigma  \left(\sqrt{W}\ud^2 \sigma\right) B
    = \int_\Sigma \left(\sqrt{W} \ud \sigma^1\wedge \ud \sigma^2\right) B
\]
for integrable fields $B$ of $\Sigma$. In this section $L^2(\Sigma)$ will be the Hilbert
space of square integrable fields in $\Sigma$ with norm given by
$\|B\|=\langle |B|^2 \rangle^{1/2}$.

Let $(Y^A)$ be an orthonormal basis of eigenfunctions of the
Laplacian acting on $L^2(\Sigma)$, where $\overline{Y^A}=Y^{-A}$.
Let $\N\subset \NN\times \NN$ be a finite set of positive
bi-indices. Denote by $\Rspan\{Y^A:{A\in \N}\}$ the
finite-dimensional subspace of real scalar fields generated
by the modes identified with $\N$,
\begin{align*}
   \Rspan\{Y^A\!:\!A\!\in\! \N\}=\Big\{&\sum_{A\in -\N\cup\N} B_A Y^A:
   \\
   &  B_A\in \CC \text{ and } B_{-A}=\overline{B_A} \text{ for all } A\in\N\Big\}.
\end{align*}
Then $\Rspan\{Y^A:A\in \N\} \simeq\RR^{2d}$
in the sense of real linear spaces, via the identification
\[
     \sum B_A Y^A\longmapsto (\Re(B_A),\Im(B_A))_{A\in -\N\cup\N}\in \RR^{2d}.
\]
The inverse of this map is given by
\[
     (B_A,C_A)_{A\in -\N\cup\N}\longmapsto \sum_{A\in -\N\cup\N} (B_A+iC_A)Y^A+
    \sum_{A\in -\N\cup\N} (B_A-iC_A)Y^{-A}.
\]

Assume for simplicity that $n=1$ and $\operatorname{Area}(\Sigma)=1$ in
\eqref{winding}. We may choose $(Y^A)$ to be
\[
    Y^{(n_1,n_2)}(\hat{X}_1,\hat{X}_2)=e^{i(n_1\hat{X}_1+n_2\hat{X}_2)}
\]
where $\hat{X}_1$ and $\hat{X}_2$ are identified as the angles of
the sector  $T^2$ of the Target. The invariance of the action of the
supermembrane with central charges in the LCG under the
area preserving diffeomorphisms allows to fix a gauge {\cite{gmr}. Under this gauge fixing,
the expression for $A_1$ and $A_2$ are the following  (in the finite
dimensional case we are considering $A_r,X^m\in\Cspan\{Y^A:A\in
\N\}$):
\begin{align*}
   A_1&=\sum_{(n_1,0)\in -\N\cup\N} A_1^{(n_1,0)}Y^{(n_1,0)} \tag{g1} \\
   A_2&=\sum_{(n_1,n_2)\in-\N\cup\N,\,n_1\not=0}A_2^{(n_1,n_2)}Y^{(n_1,n_2)}. \tag{g2}
\end{align*}
The definition of $D_r$ yields
\[
    D_rB D_r\overline{B}=\frac{\epsilon^{ac}}{\sqrt{W}}
\partial_a\hat{X}^r \partial_c B\frac{\epsilon^{bd}}{\sqrt{W}}
\partial_b\hat{X}^r\partial_d \overline{B}=g^{ab}\partial_aB \partial_b\overline{B}
\]
and
\[
  \{B,C\}= \frac{\epsilon^{ab}}{\sqrt{W}}\partial_a B \partial_b C=
   \epsilon^{rs}D_r B D_s C
\]
where $g^{ab}$ is the inverse of the metric $g_{ab}$ induced over
$\Sigma$ by the minimal immersion $\hat{X}_r$ introduced above.

We
now define the space of admissible fields in terms of which
the potential component of the bosonic Hamiltonian of the
model is realized explicitly. Let the semi-norm
$\|B\|_1^2=\langle g^{ab}\partial_a B \partial_b \overline{B}
\rangle$ defined on fields for which the integral on the right hand
side is finite. Note that
\[
    \|B\|_1^2=\langle D_r B D_r\overline{B}\rangle =\|D_rB \|^2.
\]
In the local coordinates, the Jacobian
$J=\epsilon^{ab}\partial_a\hat{X}_1\partial_b\hat{X}_2=\sqrt{W}$ and it is different from zero on any point of $\Sigma$.
This ensures that $\|\cdot\|_1$ defines a semi-norm in the Sobolev
space $H^1(\Sigma)$. It clearly is not a norm, since $\|B\|_1=0$ for
any locally constant field $B$.

Denote by $\cH_1(\Sigma)$
the orthogonal complement in $H^1(\Sigma)$ of the space $\mathcal{C}$ generated by all constant fields in $\Sigma$, that is
\[
\cH^1(\Sigma)=H^1(\Sigma)\ominus \mathcal{C}.
\]
Then $\|\cdot\|_1$ induces a norm in $\cH^1(\Sigma)$ and makes it a
Hilbert space. The  fields $A^r$ belong to the exact part in the
decomposition of $\ud X^r$, while constant functions are harmonic
and are contained in the orthogonal subspace. By virtue of (g1) and
(g2),
\begin{align*}
   &D_1A_1=0 \tag{a} \\
   &D_1A_2=0 \quad \Rightarrow \quad A_2=0 \tag{b}.
\end{align*}
Thus
\[
 \|A_2\|_1^2= \langle D_1 A_2D_1A_2\rangle.
\]


\subsection{Discreteness of the top-down regularization} \label{subsection42}
The potential of the bosonic Hamiltonian of the supermembrane
with central charge
is
\[
    V(A_r,X^m)=\frac{1}{4}\left\langle 2\cD_rX^m \cD_r X^m+\cF_{rs}\cF_{rs}+\{X^m,X^n\}^2
    \right\rangle .
\]
 Under the ansatz
(g1)-(g2),
\[
    \langle D_2 A_1 D_1 A_2\rangle =0 \quad \text{and} \quad
    \langle D_2A_1\{A_1,A_2\}\rangle =0,
\]
we get
\begin{equation} \label{potential}
     V(A_r,X^m)=\rho^2+2B+A^2
\end{equation}
where
\begin{align*}
    B(A_r,X^m)=B&=\langle D_r X^m\{A_r,X^m\}+D_1A_2\{A_1,A_2\}\rangle \\
    A(A_r,X^m)=A&=\langle \{A_1,X^m\}^2+\{A_2,X^m\}^2+\{A_1,A_2\}^2+
    \{X^m,X^n\}^2\rangle^{1/2} \\
    \rho(A_r,X^m)=\rho&=\left(\|X^m\|_1^2+\|A_r\|_1^2\right)^{1/2}.
\end{align*}
 Let
\begin{align*}
    W_1&=(D_1A_2,D_2A_1,D_1X^m,D_2X^m,0)\\
    W_2&=(\{A_1,A_2\},0,\{A_1,X^m\},\{A_2,X^m\},\{X^m,X^n\}) .
\end{align*}
Then
\begin{align*}
   \rho^2&=\langle W_1,W_1\rangle= \|W_1\|^2\\
       A^2&=\langle W_2,W_2\rangle = \|W_2\|^2 \\
   B&=\langle W_1,W_2\rangle.
\end{align*}

\begin{lemma} \label{polar}
The functionals $A$, $B$ and $\rho$ are continuous in the norm $\|\cdot\|_1$
of $\cH^1$. Moreover $A$ is homogeneous of order 2 and $B$ is homogeneous of
order 3 in the sense that
\[
   A(cA_r,cX^m)=c^2A(A_r,X^m) \qquad \text{and} \qquad
   B(cA_r,cX^m)=c^3B(A_r,X^m)
\]
for any constant $c\in \RR$.
\end{lemma}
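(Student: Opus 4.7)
The plan is twofold: derive the homogeneity assertions directly from the bilinearity of the symplectic bracket, and reduce continuity of $A$ and $B$ to elementary finite-dimensional estimates on the span of a finite collection of Laplacian eigenfunctions.

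For homogeneity, I would observe that $D_r$ is linear and $\{\cdot,\cdot\}=\epsilon^{rs}D_r(\cdot)D_s(\cdot)$ is bilinear in its two arguments. Hence, under the scaling $(A_r,X^m)\mapsto(cA_r,cX^m)$, the vector $W_1$, whose entries are applications of $D_r$ to the fields, is multiplied by $c$, while $W_2$, whose entries are symplectic brackets of pairs of fields, is multiplied by $c^2$. It follows at once that $A=\|W_2\|$ scales like $c^2$ and $B=\langle W_1,W_2\rangle$ scales like $c^3$.

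For continuity, the functional $\rho$ is immediate: by definition $\rho^2=\|X^m\|_1^2+\|A_r\|_1^2$ is a sum of squared $\|\cdot\|_1$ norms of the field components. The crucial observation for $A$ and $B$ is that, under the gauge fixing (g1)--(g2), the fields $A_r$ and $X^m$ range over the finite-dimensional real vector space $\Rspan\{Y^A:A\in -\N\cup\N\}$. On such a finite-dimensional space of smooth fields on $\Sigma$, all norms are equivalent, so there is a constant $C_\N$, depending only on the finite index set $\N$, such that $\|D_r B\|_{L^\infty(\Sigma)}\le C_\N\|B\|_1$ for every $B$ in this span. Combined with the identity $\|D_r B\|=\|B\|_1$, this yields by H\"older's inequality the bilinear estimate
\[
\|\{B,C\}\|\le 2C_\N\,\|B\|_1\|C\|_1.
\]
Applying this entrywise to the components of $W_2$ gives $\|W_2\|\lesssim (\|A_r\|_1+\|X^m\|_1)^2$, with bilinear and hence continuous dependence on the fields, while $\|W_1\|\lesssim \|A_r\|_1+\|X^m\|_1$ with linear (continuous) dependence. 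Thus $A=\|W_2\|$ is a continuous quadratic form, and $B=\langle W_1,W_2\rangle$ is continuous by Cauchy--Schwarz.

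I anticipate no substantive obstacle beyond bookkeeping, since the finite-dimensional reduction sidesteps any functional-analytic subtlety one might fear from products of $H^1$ functions on a two-dimensional manifold. The only point that requires a moment of care is ensuring that $\|\cdot\|_1$ is a genuine norm on the relevant span rather than merely a seminorm; this is guaranteed by working in $\cH^1(\Sigma)=H^1(\Sigma)\ominus\mathcal{C}$ and by the gauge conditions (g1)--(g2), which already eliminate the locally constant modes on which $\|\cdot\|_1$ degenerates.
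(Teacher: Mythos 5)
Your proposal is correct and follows the same high-level route as the paper's (very terse) proof: homogeneity from bilinearity of the bracket, continuity from bi-continuity of the bracket. The one place where you add genuine content is the continuity step. The paper simply asserts that ``the bracket $\{\cdot,\cdot\}$ is bi-continuous in the norm of $\cH^1$,'' but as a statement about the full infinite-dimensional $\cH^1(\Sigma)$ this is not literally true in two dimensions: $B,C\in H^1$ gives $D_r B,\,D_s C\in L^2$, so $\{B,C\}$ lands only in $L^1$, not in $L^2$, and hence $A=\|W_2\|_{L^2}$ would not even be well defined. Your reduction to the finite-dimensional span $\Rspan\{Y^A:A\in\N\}$, where equivalence of norms yields an $L^\infty$ bound on $D_rB$ in terms of $\|B\|_1$ and hence the $L^2$-valued bilinear estimate $\|\{B,C\}\|\lesssim\|B\|_1\|C\|_1$, is exactly the step that makes the bi-continuity claim correct. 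So you are not doing anything fundamentally different from the paper, but you are supplying the finite-dimensionality argument that the paper's one-liner silently relies on, and you correctly flag that the degenerate (constant) modes have already been quotiented out so that $\|\cdot\|_1$ is a genuine norm on the relevant span.

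One small bookkeeping point: the paper's convention is $\|B\|_1^2=\sum_r\|D_rB\|_{L^2}^2$, so for a fixed $r$ one has the inequality $\|D_rB\|_{L^2}\le\|B\|_1$ rather than an equality; this does not affect your Hölder estimate. Also note that for the homogeneity claim as stated (with arbitrary real $c$, including $c<0$) you are implicitly using $\|c^2 W_2\|=c^2\|W_2\|$, which is fine since $c^2\ge0$; the scaling of $B=\langle W_1,W_2\rangle$ by $c\cdot c^2=c^3$ is exact in sign.
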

\begin{proof}
The proof is elementary.
For the first part we just need to observe that the bracket $\{\cdot,\cdot \}$
is bi-continuous in the norm of $\cH^1$. For the second part we just need
to observe that
\[
   \{cu,cw\}=c^2\{u,w\}.
\]
\end{proof}

Below we will consider two functionals
\begin{equation*}
   a=\frac{A}{\rho^2} \qquad \text{and} \qquad
   b=\frac{B}{\rho^3}
\end{equation*}
defined for all $(A_r,X^m)\not=0$. By virtue of Lemma~\ref{polar},
they are both continuous in the norm of $\cH^1$ and only depend on
the direction of the field $(A_r,X^m)$, not on $\rho$. In the proof
of Theorem~\ref{bound_V_finite_case} we will make use of this fact.

\begin{lemma} \label{harmonic}
Let $F\in\Cspan\{Y^A:A\in
\N\}$. If $F$ vanish in a subset of $\Sigma$ of positive measure, then it should vanish
identically in the whole of $\Sigma$.
\end{lemma}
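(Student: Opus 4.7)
The plan is to exploit that every $F\in \Cspan\{Y^A:A\in\N\}$ is a trigonometric polynomial in the angular coordinates $\hat X_1,\hat X_2$ of the sector $T^2$, and hence real-analytic. Writing
\[
 F(\hat X_1,\hat X_2)=\sum_{(n_1,n_2)\in -\N\cup\N} B_{n_1,n_2}\, e^{i(n_1\hat X_1+n_2\hat X_2)},
\]
I will reduce the statement to the one-dimensional fact that a nonzero trigonometric polynomial in a single variable has only finitely many zeros in a period, and hence cannot vanish on any set of positive one-dimensional measure.

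Suppose $F$ vanishes on a measurable set $S\subseteq \Sigma$ with $|S|>0$. Identifying $\Sigma$ with $T^2$ through the minimal immersion $\hat X_r$ (whose Jacobian $J=\sqrt W$ is nowhere zero, as recorded in Subsection~\ref{subsection41}), Fubini's theorem ensures that the set of $y$ for which the horizontal slice $S_y=\{x:(x,y)\in S\}$ has positive one-dimensional measure is itself of positive one-dimensional measure. For each such $y$, the section $x\mapsto F(x,y)$ is a trigonometric polynomial in $x$, and therefore extends to an entire function in the complex variable $x$. Such a function either vanishes identically or has an isolated zero set; since a set of positive measure contains an accumulation point, it must be identically zero. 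Hence $F(\cdot,y)\equiv 0$ for all $y$ in a set of positive measure.

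Extracting the $n_1$-th Fourier coefficient of $F(\cdot,y)$ produces
\[
 C_{n_1}(y)=\sum_{n_2:(n_1,n_2)\in -\N\cup\N} B_{n_1,n_2}\, e^{in_2 y},
\]
and $F(\cdot,y)\equiv 0$ forces $C_{n_1}(y)=0$ for every $n_1$ and every $y$ in that set. But each $C_{n_1}$ is again a trigonometric polynomial in a single variable, so the same one-dimensional analyticity argument gives $C_{n_1}\equiv 0$. Consequently every coefficient $B_{n_1,n_2}$ vanishes and $F\equiv 0$ on $\Sigma$.

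There is no genuine obstacle here; the argument is the identity theorem for real-analytic functions packaged via Fubini to upgrade from one dimension to two. The only point worth being careful about is the identification of $\Sigma$ with $T^2$ via $(\hat X_1,\hat X_2)$, which is valid almost everywhere precisely because of the non-vanishing of the Jacobian $J=\sqrt W$ already used in the setup of the top-down regularization.
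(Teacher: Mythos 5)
Your proof is correct and is essentially a fully worked-out version of the argument the paper only asserts in one line (``$F$ is a linear combination of finitely many harmonic functions''): the real-analyticity of the trigonometric polynomial $F$, made concrete through Fubini's theorem and the one-dimensional identity theorem, is exactly the unique-continuation property being invoked. The only remark worth adding is that one could shortcut the Fubini reduction by citing directly that a nonzero real-analytic function on a connected manifold has a zero set of measure zero, but your elementary route is equally valid.
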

\begin{proof}
The statement is a direct consequence of the fact that $F$ is a linear combination of finitely many harmonic functions.
\end{proof}

\begin{theorem} \label{bound_V_finite_case}
There exist a constant $0<k<1$ such that
\[
     V(A_r,X^m)\geq k \rho^2(A_r,X^m)
\]
for all $A_r,X^m\in\Rspan\{Y^A:A\in
\N\}$.
\end{theorem}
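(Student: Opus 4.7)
The plan is to argue by contradiction. I observe that the potential can be written as $V = \|W_1 + W_2\|^2$ and, using the homogeneity properties recorded in Lemma~\ref{polar}, note that for $(A_r,X^m) = \sigma \nu$ with $\nu$ in the unit sphere $S = \{(A_r,X^m) : \rho(A_r,X^m) = 1\}$ of the finite-dimensional real span and $\sigma > 0$, one has $V/\rho^2 = \|W_1(\nu) + \sigma W_2(\nu)\|^2$. The statement is therefore equivalent to $\inf_{\nu \in S,\,\sigma > 0} \|W_1(\nu) + \sigma W_2(\nu)\|^2 > 0$. I suppose instead that this infimum vanishes and extract witnessing sequences $\nu_l \in S$ and $\sigma_l > 0$.

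By compactness of $S$, pass to a subsequence $\nu_l \to \hat\nu \in S$; continuity of $A$ and $B$ from Lemma~\ref{polar} gives $A(\nu_l) \to \hat A$ and $B(\nu_l) \to \hat B$, while $\|W_1(\hat\nu)\| = 1$. The Cauchy--Schwarz bound $\|W_1 + \sigma W_2\|^2 \geq (1 - \sigma A)^2$ on $S$ forces $\sigma_l A(\nu_l) \to 1$ and $B(\nu_l)/A(\nu_l) \to -1$ (equality in Cauchy--Schwarz in the limit). Two cases arise. In Case A, $\hat A > 0$ and $\sigma_l \to 1/\hat A$ is bounded, so the limit identity reads $\hat W_2 = -\hat A\, \hat W_1$. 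In Case B, $\hat A = 0$ and $\sigma_l \to \infty$; I rescale to unit-norm vectors $\tilde W_2^l = W_2(\nu_l)/A(\nu_l)$ in the ambient Euclidean space and, after passing to a further subsequence, obtain $\tilde W_2^l \to -\hat W_1$.

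In either case the identity is examined entry by entry against the explicit 5-tuple form of $W_1$ and $W_2$. The second entry of $W_2$ is identically zero, so matching forces $D_2 \hat A_1 = 0$, and gauge (g1) yields $\hat A_1 = 0$. Substituting this into the first entry and invoking (b) gives $\hat A_2 = 0$. Substituting $\hat A_1 = \hat A_2 = 0$ into the third and fourth entries gives $D_1 \hat X^m = D_2 \hat X^m = 0$, so each $\hat X^m$ is constant on $\Sigma$ and hence vanishes in $\cH^1(\Sigma)$. Thus $(\hat A_r, \hat X^m) = 0$, contradicting $\hat\nu \in S$.

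The hard part is Case B, where the limit relation arises from $0/0$ ratios of brackets. One must additionally track the subsequential limits $\hat{\tilde A}_r = \lim A_r^l/A(\nu_l)$, which must stay bounded along the sequence or else the bracket identities would diverge. These satisfy auxiliary pointwise relations of the form $(D_2 \hat{\tilde A}_1 - 1)\,D_1 \hat A_2 = 0$ and analogues involving $\hat X^m$. For each such identity Lemma~\ref{harmonic}, together with mode matching in the exponential basis, rules out the option $D_2 \hat{\tilde A}_1 \equiv 1$: the constant field $1$ is not in $\operatorname{Span}\{Y^A : A \in \N\}$ since $(0,0) \not\in \N$. This pushes the argument back into the equations already treated in Case A, and the same entry-by-entry reduction then produces the contradiction.
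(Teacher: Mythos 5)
Your overall strategy — write $V=\|W_1+W_2\|^2$, reduce by homogeneity to the unit sphere, observe that equality must be approached in Cauchy--Schwarz, and then refute the resulting collinearity relation entry by entry against the explicit five-tuple form of $W_1$, $W_2$, using the gauge facts (a), (b) and Lemma~\ref{harmonic} — is the same as the paper's (the paper parametrizes the quadratic as $P_\phi(\rho)=1+2b\rho+a^2\rho^2$ and minimizes over $\rho$, which is the same reduction). Your Case~A is essentially correct and matches the paper's treatment of $a(\hat\phi)\ne0$.

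The gap is in Case~B. You posit the existence of $\hat{\tilde A}_r=\lim_{l} A_r^l/A(\nu_l)$ and assert that these limits ``must stay bounded along the sequence or else the bracket identities would diverge''. That is not justified: once $D_2\hat A_1=0$ has been established, both $D_2A_1^l$ and $A(\nu_l)$ tend to zero, so the ratio $F_l=D_2A_1^l/A(\nu_l)$ is indeterminate; the constraint $\|\tilde W_2^l\|=1$ controls $F_l\cdot D_1A_2^l$ in $L^2$, but this does not preclude $F_l$ from being unbounded (for instance on a shrinking region where $D_1A_2^l$ is small). Consequently the pointwise relation $(D_2\hat{\tilde A}_1-1)\,D_1\hat A_2=0$, on which your Case~B hinges, is not available. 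The paper circumvents this by never taking the raw limit of $F_j$: it normalizes, extracting a subsequence of $(1-F_j)^2/\|(1-F_j)^2\|$ (which lives on a compact set in a fixed finite-dimensional span) with limit $F_-\ge0$, $\|F_-\|=1$, and uses $\langle(1-F_j)^2\rangle\ge1$ (the orthogonality of $F_j$ to constants that you correctly identify) to bound the normalizing factor away from zero. Only then does Lemma~\ref{harmonic} apply to conclude $(D_1A_2/\rho)_\wedge=0$, and similarly for $D_1X^m$, $D_2X^m$.

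Related to this, your argument does not address the $D_2A_2$ contribution in the fourth entry. In the paper's Case~B, closing that entry requires the structural bound $\|D_2A_2\|^2\le N^2\|D_1A_2\|^2$ (a consequence of the gauge-fixed mode expansion), which is used to show that $\bigl(D_2A_2/(a\rho)\bigr)/\|1+G_j\|$ remains bounded. Without some substitute for this estimate, the reduction of the fourth entry in Case~B is incomplete. I would ask you to either supply the missing compactness normalization (the paper's $F_-$ and $G_+$ argument) together with the $D_2A_2$ bound, or to give a genuinely different argument showing that the collinearity relation cannot hold as $a(\hat\phi_j)\to0$.
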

\begin{proof}
Since $\Rspan\{Y^A:{A\in \N}\}\simeq \RR^{2d}$ and all norms
in a finite-dimensional subspace are equivalent, $\|\cdot\|_1$
is equivalent to the Euclidean norm. The functionals $A$, $B$ and $\rho$ can then be identified with real-valued functions
 of\footnote{ In the supermembrane the number of degree of freedom is $8$ hence we get $\RR^{16d}$. Evidently the proof holds for any $D>2d$.} $\RR^{D}$.
 Let $S^{D-1}$ be the hypersphere $\rho=1$.
Let $a$ and $b$ be as above.
Since they are independent of $\rho$, we will often denote $a(\phi)\equiv a$ and $b(\phi)\equiv b$ for $\phi\in S^{D-1}$.

Let
\[
    P_\phi(\rho)=1+2b \rho +a^2 \rho^2.
\]
It is readily seen that $V(A_r,X^m)=\rho^2 P_\phi(\rho)$. We will achieve the proof
of the above identity for the potential, by showing that the polynomial
$P_\phi(\rho)$ is bounded uniformly from below for all
$A_r,X^m\in\Rspan\{Y^A\}_{A\in \N}$. Verifying the later requires
a number of steps.

Firstly note that, by the Cauchy-Schwartz inequality,
\begin{equation} \label{C-S}
   |b(\phi)|=\frac{|\langle W_1,W_2\rangle|}{\rho^3}\leq
   \frac{\|W_1\|_{L^2(\Sigma)}\|W_2\|_{L^2(\Sigma)}}{\rho^3}=a(\phi)
   \text{ for all } \phi\in S^{D-1}.
\end{equation}
If $a(\phi)=0$, then also $b(\phi)=0$ and $P_\phi(\rho)\equiv 1$ for any finite $\rho$.
If $a(\phi)\not=0$, then
\[
    P_\phi(\rho)=\left(a\rho+\frac ba\right)^2+\left(1-\frac{b^2}{a^2}\right)
\]
and we are confronted with two further possibilities. If $a(\phi)\not=0$
and $b(\phi)\geq 0$, then
\begin{equation*}
   \min_{\rho\geq 0}P_\phi(\rho)=P_\phi(0)=1.
\end{equation*}
If, on the other hand, $a(\phi)\not=0$ and $b(\phi)< 0$, then
\begin{equation*}
\min_{\rho\geq 0}P_\phi(\rho)=P_\phi(\tilde{\rho})=1-\frac{b^2}{a^2}
\qquad \text{where} \quad \tilde{\rho}=-\frac b{a^2}.
\end{equation*}
In order to complete the proof of the lemma we show that if
$\phi_j\in S^{D-1}$ is a sequence such that $a(\phi_j)\not=0$ and
\[
      \frac{b^2(\phi_j)}{a^2(\phi_j)}\to \ell \qquad\qquad \text{ as } j\to +\infty,
\]
then necessarily $\ell<1$.

By \eqref{C-S}, $\ell \leq 1$. Assume that there exists a sequence
$\hat{\phi}_j\in S^{D-1}$ such that $a(\hat{\phi}_j)\not=0$ and
\begin{equation} \label{inposs}
\frac{b^2(\hat{\phi}_j)}{a^2(\hat{\phi}_j)}\to 1 \qquad \qquad \text{as } j\to +\infty.
\end{equation}
Below we prove that this always lead to a contradiction.
Since $S^{D-1}$ is compact, after extracting a subsequence if necessary,
we can assume that $\hat{\phi}_j\to \hat{\phi}\in S^{D-1}$. There are now two possible cases: either
\begin{gather}
   a(\hat{\phi})\not= 0 \label{poss1} \qquad \text{ or} \\
 a(\hat{\phi})= 0. \label{poss2}
  \end{gather}
We will denote by $(\cdot)_{j}$ and $(\cdot)_{\wedge}$ the evaluation
of the corresponding functionals at (or extracting the corresponding field coordinate
of) $\hat{\phi}_j$ and $\hat{\phi}$ respectively. By continuity,
\begin{gather*}
    \left(\frac{D_1A_2}{\rho}\right)_j\to\left(\frac{D_1A_2}{\rho}\right)_\wedge,
    \qquad \left(\frac{D_2A_1}{\rho}\right)_j\to\left(\frac{D_2A_1}{\rho}\right)_\wedge
    \\ \text{and}\qquad \left(\frac{D_r X^m}{\rho}\right)_j\to \left(\frac{D_r X^m}{\rho}\right)_\wedge
\end{gather*}
in the norm of $L^2(\Sigma)$.

\noindent \underline{Case A}. Suppose that \eqref{poss1} holds true. Then from (\ref{inposs})
\[
\frac{\langle (W_1)_\wedge,(W_2)_\wedge \rangle^2}{\|(W_1)_\wedge\|^2\|(W_2)_\wedge \|^2}=\frac{b^2(\hat{\phi})}{a^2(\hat{\phi})}=1.
\]
Hence $(W_1)_\wedge$ and $(W_2)_\wedge$ are parallel as elements of
$L^2(\Sigma)$, since $b<0$ we have
\begin{align}
  \left\langle\left(\frac{D_2A_1}{\rho}\right)_{\wedge}^2\right\rangle&=0 \label{ide1} \\
   \left\langle\left(\frac{\{X^m,X^n\}}{A}\right)_{\wedge}^2\right\rangle&=0 \label{ide2} \\
    \left\langle\left[\left(\frac{D_1A_2}{\rho}\right)_{\wedge}+ \left(\frac{\{A_1,A_2\}}{A}\right)_{\wedge}\right]^2\right\rangle&=0 \label{ide3} \\
    \left\langle\left[\left(\frac{D_1X^m}{\rho}\right)_{\wedge}+ \left(\frac{ \{A_1,X^m\}}{A}\right)_{\wedge}\right]^2\right\rangle& =0 \label{ide4} \\
    \left\langle\left[\left(\frac{D_2X^m}{\rho}\right)_{\wedge}+ \left(\frac{\{A_2,X^m\}}{A}\right)_{\wedge}\right]^2\right\rangle&=0 \label{ide5}
\end{align}
All the fields involved in the above are a linear
combination of finitely many harmonic functions and hence they are continuous
as maps of $\Sigma$. Thus the terms inside the integral symbols on the left side of \eqref{ide1}-\eqref{ide5} also vanish pointwise.
It is readily seen that $\left(\frac{D_2A_1}{\rho}\right)_\wedge =0$. Moreover, since
\[
   \left(\frac{D_1A_2}{\rho}\right)_{\wedge}+ \left(\frac{\{A_1,A_2\}}{A}\right)_{\wedge}
   =\left(\frac{D_1A_2}{\rho}\right)_\wedge\left(1-\frac{D_2A_1}{\rho a}\right)_\wedge
\]
where $a(\hat{\phi})\not=0$, \eqref{ide3} yields $\left(\frac{D_1A_2}{\rho}\right)_\wedge =0$.
Since
\[
   \left(\frac{D_1X^m}{\rho}\right)_{\wedge}+ \left(\frac{\{A_1,X^m\}}{A}\right)_{\wedge}=
   \left(\frac{D_1X^m}{\rho}\right)_\wedge\left(1-\frac{D_2A_1}{\rho a}\right)_\wedge,
\]
\eqref{ide4} yields $\left(\frac{D_1 X^m}{\rho}\right)_\wedge=0$. Since
\[
   \left(\frac{D_2X^m}{\rho}\right)_{\wedge}+ \left(\frac{\{A_2,X^m\}}{A}\right)_{\wedge}=
   \left(\frac{D_2X^m}{\rho}\right)_\wedge\left(1+\frac{D_1A_2}{\rho a}\right)_\wedge-\left(\frac{D_2A_2D_1X^m}{a\rho^2}\right)_{\wedge},
\]
the last term vanishes
since,
\[
\frac{D_2A_2}{\rho}\vert_{\wedge}=D_2\left(\frac{A_2}{\rho}\right)_{\wedge},
\]

\[
\frac{D_1A_2}{\rho}\vert_{\wedge}=0 \to \frac{A_2}{\rho}\vert_{\wedge}=0
\]
 and also $\left(\frac{D_1X^m}{\rho}\right)_{\wedge}=0,$ then
\eqref{ide5} yields $\left(\frac{D_2 X^m}{\rho}\right)_\wedge=0$. This is a contradiction since
\begin{equation} \label{contr}
\left<
\left(\frac{D_2A_1}{\rho}\right)^2+\left(\frac{D_1A_2}{\rho}\right)^2+\left(\frac{D_1X^m}{\rho}\right)^2
+\left(\frac{D_2X^m}{\rho}\right)^2\right>=1\not=0
\end{equation}
for all $\phi\in S^{16d-1}$.

\noindent \underline{Case B}. Suppose now that \eqref{poss2} holds true. Let
\[
    F_j=\left(\frac{D_2 A_1}{\rho a}\right)_j\qquad \text{and} \qquad
    G_j=\left(\frac{D_1 A_2}{\rho a}\right)_j.
\]
Both these fields are independent of $\rho$. Since
\[
     \frac{\langle (W_1)_j,(W_2)_j \rangle^2}{\|(W_1)_j\|^2\|(W_2)_j \|^2}\to 1
\]
and all the expressions on the left hand side below are continuous in the norm of $L^2(\Sigma)$,
then
\begin{align}
  \left(\frac{D_2A_1}{\rho}\right)_{j}&\to0 \label{idej1} \\
   \left(\frac{\{X^m,X^n\}}{A}\right)_{j}&\to0 \label{idej2} \\
    \left(\frac{D_1A_2}{\rho}\right)_{j}+ \left(\frac{\{A_1,A_2\}}{A}\right)_{j}&\to0 \label{idej3} \\
    \left(\frac{D_1X^m}{\rho}\right)_{j}+ \left(\frac{ \{A_1,X^m\}}{A}\right)_{j}& \to0 \label{idej4} \\
    \left(\frac{D_2X^m}{\rho}\right)_{j}+ \left(\frac{\{A_2,X^m\}}{A}\right)_{j}&\to0 \label{idej5}
\end{align}
as $j\to+\infty$. The limits in \eqref{idej1}-\eqref{idej5} are regarded in the sense of $L^2$.

The condition \eqref{idej1} and an argument involving continuity in $L^2(\Sigma)$,  imply
\eqref{ide1} and thus $\left(\frac{D_2A_1}{\rho}\right)_\wedge=0$
pointwise as in the previous case. We now show that also
$\left(\frac{D_1A_2}{\rho}\right)_\wedge=0$ under condition
\eqref{poss2} as follows. Since every field $-\N\cup\N$ in $\cH^1$
is orthogonal to the constant fields and the total derivative $D_2$
leaves invariant $\Rspan\{Y^A:A\in
\N\}$,  $\langle
(1-F_j)^2\rangle \geq 1$ for all $j\in \NN$. By virtue of the
compactness of $S^{D-1}$, and after the extraction of a subsequence
if necessary, we can then assume that

\[
       \frac{(1-F_j)^2}{\|(1- F_j)^2\|}\to F_{-}
\]
for suitable $F_-\in L^2(\Sigma)$ such that $\|F_- \|=1$.
Since
\[
   \left\langle \left[\left(\frac{D_1A_2}{\rho}\right)_{j}+ \left(\frac{\{A_1,A_2\}}{A}\right)_{j}\right]^2\right\rangle =
   \left\langle \left(\frac{D_1A_2}{\rho}\right)^2_j\left(1-F_j\right)^2\right\rangle,
\]
\eqref{idej3} implies
\[
   \|(1-F_j)^2\| \left\langle \left(\frac{D_1A_2}{\rho}\right)_j^2\frac{\left(1-F_j\right)^2}{\|(1- F_j)^2\|}\right\rangle=
   \left\langle \left(\frac{D_1A_2}{\rho}\right)_j^2\left(1-F_j\right)^2\right\rangle\to 0.
\]
Then, since  $ \|(1-F_j)^2\| >C>0$,
\[
    \left\langle \left(\frac{D_1A_2}{\rho}\right)_j^2\frac{\left(1-F_j\right)^2}{\|(1-F_j)^2\|}\right\rangle \to 0.
\]
Thus $\langle \left(\frac{D_1A_2}{\rho}\right)_\Lambda^2 F_{-}
\rangle =0$. Since $F_{-}>0$ in a set of positive measure,
Lemma~\ref{harmonic} applied to $A_2$ and the latter, ensure
$\left(\frac{D_1A_2}{\rho}\right)_\wedge=0$.

In a similar fashion,
\[
   \left\langle \left[\left(\frac{D_1X^m}{\rho}\right)_{j}+
   \left(\frac{\{A_1,X^m\}}{A}\right)_{j}\right]^2\right\rangle=\left\langle\left(\frac{D_1X^m}{\rho}\right)_j^2\left(1-F_j\right)^2\right\rangle,
\]
equation \eqref{idej4}, the fact that $\|F_-\|=1$ and Lemma~\ref{harmonic}, yield $\left(\frac{D_1X^m}{\rho}\right)_\wedge=0$.
Moreover, just as for $F_j$, also $\langle (1+G_j)^2\rangle\geq 1$ for all $j\in \NN$ and we can assume that
\[
       \frac{(1+G_j)}{\|(1+ G_j)\|}\to G_+
\]
for suitable $G_+ \in L^2(\Sigma)$ such that $\|G_+\|=1$. Hence
\begin{equation}
\begin{aligned}
   &\left\langle \left[\left(\frac{D_2X^m}{\rho}\right)_{j}+ \left(\frac{\{A_2,X^m\}}{A}\right)_{j}\right]^2\right\rangle=\\ \nonumber &
 \left\langle \left[\left(\frac{D_2X^m}{\rho}\right)_{j}\frac{(1+G_j)}{\|(1+ G_j)\|}
 -\frac{\left(\frac{D_2A_2}{a\rho}\right)}{\|(1+G_j\|}\left(\frac{D_1X^m}{\rho}\right)_j\right]^2\right\rangle\to 0,
 \end{aligned}
\end{equation}
implying $\left(\frac{D_2X^m}{\rho}\right)_{\Lambda}=0$,
since \[
\frac{\left(\frac{D_2A_2}{a\rho}\right)}{\|(1+G_j\|}
\] is bounded.
In fact
\[
\|D_2A_2\|^2=\sum_{m,n}^{N}m^2\vert A_{2(m,n)}\vert^2\le N^2 \sum_{m,n}^{N}\vert A_{2(m,n)}\vert^2\le N^2 \|D_1A_2\|^2,
\]
since
\[
\|D_1 A_2\|^2=\sum_{m,n}^N n^2\vert A_{2(m,n)}\vert^2 \quad
\textit{implies}\quad \sum_{m,n}^N\vert A_{2(m,n)}\vert^2\le
\|D_1A_2\|^2.
\]
Hence
\[
\left(\frac{(\frac{D_2A_2}{a\rho})}{\|(1+ G_j)\|}\right)^2\le N^2\frac{\|G_j\|^2}{1+\|G_j\|^2}\le N^2.
\]
This shows that \eqref{poss2} contradicts \eqref{contr} and
completes the proof of the theorem.
\end{proof}

Note that $|b|/a=\cos(\theta)$ where $\theta$ is the angle between
$W_1$ and $W_2$. The crucial point for the validity of Theorem~\ref{bound_V_finite_case}
is the fact that there is an uniform lower bound for $\theta$ whenever
$A_r,X^m\in \Rspan\{Y^A:A\in
\N\}$.

We may now use the Lemma~\ref{nuevo_lemma} to show that all these regularized models have discrete spectrum with finite multiplicity. A similar bound  may be obtained for the exact potential of the supermembrane with central charges. By taking the large $N$ limit, with constant $k\ne 0$, this strongly suggests that the exact Hamiltonian has discrete spectrum.

\subsection{An example in small dimension} \label{subsection44}
Let us illustrate the previous general proof in a very simple case.
We consider the Hamiltonian of previous section in the particular case in which the basis only contains the elements:
\[ \{Y^{(-1,-1)},Y^{(-1,0)}, Y^{(0,-1)},  Y^{(-1,1)}, Y^{(1,-1)},Y^{(1,0)}, Y^{(0,1)},Y^{(1,1)}\}\] where $Y^{(m,n)*}= Y^{(-m,-n)}$, with $Y^{(m,n)}=e^{i(m\widehat{X}_1+n\widehat{X}_2)}$.  This model satisfies the bound obtained in Theorem~\ref{bound_V_finite_case}, however we illustrate the abstract argument in this simple setting.

In the gauge fixing condition  we consider the fields $A_{rs}$, satisfying
\begin{equation}
\begin{aligned}
 A_1&=A_{(1,0)}Y^{(1,0)}+c_1\\
 A_2&= B_{(0,1)}Y^{(0,1)}+B_{(1,1)}Y^{(1,1)}+B_{(-1,1)}Y^{(-1,1)}+c_2
\end{aligned}
\end{equation}
where $c_i$ are coupling constants, $B_{(m,n)}^*=B_{(-m,-n)}$ and $A_{(m,0)}^*=A_{-m,0}$. According to our definition, the covariant derivatives act on the gauge fields as follows:
\[
D_1A_2=inB_{(m,n)}Y^{(m,n)}, D_2A_1=-imA_{(m,0)}Y^{(m,0)}.
\]
Let us consider one of the terms of the potential (a similar argument follows for any the others),
\begin{equation}
\begin{aligned}
 V=&\left\langle \left\vert D_1A_2+i\{A_1,A_2\}\right\vert^2+\vert D_2A_1\vert^2\right\rangle\\ &=
\rho^2\left(\left|\frac{iB_{(0,1)}-A_{(1,0)}B_{(-1,1)}+A_{(-1,0)}B_{(1,1)}}{\rho}\right|^2+\left|\frac{iB_{(1,1)}-A_{(1,0)}B_{(0,1)}}{\rho}\right|^2\right)\\ \nonumber &+\rho^2\left(\left|\frac{iB_{(1-,1)}+A_{(1,0)}B_{(0,-1)}}{\rho}\right|^2+
 \left|\frac{A_{(1,0)}B_{(1,1)}}{\rho}\right|^2+\left|\frac{A_{(-1,0)}B_{(-1,1)}}{\rho}\right|^2+\left|\frac{A_{(1,0)}}{\rho}\right|^2
\right),
\end{aligned}
\end{equation}
where $\rho^2=\vert A_{(1,0)}\vert^2+ \vert B_{(0,1)}\vert^2+ \vert B_{(1,1)}\vert^2+\vert B_{(-1,1)}\vert^2.$
We establish the lower bound found in Theorem~\ref{bound_V_finite_case} in this particular setting by mimicking the abstract proof.

To this end, it is enough to show that the sum of all brackets is  bounded below by $C>0$.
Firstly we check that there are no points at which all brackets vanish.
We have:\begin{equation}
\begin{aligned}
  \frac{1}{\rho}(iB_{(0,1)}-A_{(1,0)}B_{(-1,1)}+A_{(-1,0)}B_{(1,1)})=0 \label{i1}
  \end{aligned}
    \end{equation}
 \begin{equation}
  \begin{aligned}
    \frac{1}{\rho}(iB_{(1,1)}- A_{(1,0)}B_{(0,1)})=0 \label{i2}
    \end{aligned}
    \end{equation}
    \begin{equation}
\begin{aligned}
     \frac{1}{\rho}(iB_{(-1,1)}+A_{(-1,0)}B_{(0,1)})=0\label{i3}
     \end{aligned}
    \end{equation}
     \begin{equation}
\begin{aligned}
     \frac{1}{\rho}A_{(1,0)}B_{(1,1)}=0\label{i4}
     \end{aligned}
    \end{equation}
     \begin{equation}
\begin{aligned}
     \frac{1}{\rho}A_{(1,0)}B_{(1,-1)}=0\label{i5}\\
     \end{aligned}
    \end{equation}
     \begin{equation}
\begin{aligned}
     \frac{A_{(1,0)}}{\rho}=0\label{i6}
    \end{aligned}
    \end{equation}
These identities are equivalent to  the following system:
  \begin{equation*}
\begin{pmatrix}
i & A_{(-1,0)}& -A_{(1,0)}\\
-A_{(1,0)} & i & 0 \\
A_{(-1,0)}& 0 & i\\
\end{pmatrix} \begin{pmatrix}  \frac{B_{(0,1)}}{\rho}\\
 \frac{B_{(1,1)}}{\rho} \\
\frac{B_{(-1,1)}}{\rho} \\ \end{pmatrix} =0.
\end{equation*}
Solving the latter, gives $\vert A_{(1,0)}\vert=\frac{1}{\sqrt{2}}$. By virtue of \eqref{i4} and  \eqref{i5}, this ensures $\frac{B_{(1,1)}}{\rho}=0$ and $\frac{B_ {(-1,1)}}{\rho}=0$. Moreover, from  \eqref{i1} we find that $\frac{B_{(0,1)}}{\rho}=0$. The latter is impossible, hence there is no solution to \eqref{i1}-\eqref{i6}.

Denote by $f$ the sum of the brackets in the expression for the potential $V$. This is a real continuous function of  the real and imaginary part of $B_{(m,n)}$ and $A_{(1,0)}$. Let us examine the compact set $\vert A_{(1,0)}\vert \le R_0 $. Since the other variables are divided by $\rho$, they are defined on a compact set. In this compact set, $f$ has a minimum value which must be different from zero,
\[
f\ge C>0.
\]

Consider now the case when $\vert A_{(1,0)}\vert \to +\infty$, which should also render $f\ge C >0$. Otherwise, there would be a sequence on which $f\to 0$. However from the expression for $f$,
\[\left\vert\frac{B_{(1,1)A_{(1,0)}}}{\rho}\right\vert \to 0\qquad \text{and}
\qquad\left\vert \frac{B_{(-1,1)}A_{(1,0)}}{\rho}\right\vert\to 0.\] Hence \[\left\vert  \frac{B_{(1,1)}}{\rho}\right\vert\to 0\quad \textrm{and}\quad \left\vert \frac{B_{(-1,1)}}{\rho}\right\vert \to 0,\] and we can also conclude that $\left\vert \frac{B_{(0,1)}}{\rho}\right\vert\to 0$. By looking at the last term of $f$ we also have $\vert \frac{A_{(1,0)}}{\rho}\vert\to 0$. Again this is impossible, $f$ must thus be bounded from below by $C>0$.

By a direct application of Lemma~\ref{nuevo_lemma}, the regularized model incorporating the fermionic contribution has a purely discrete spectrum.

\subsection{SU(N) regularization} \label{subsection46}
This regularization of the supermembrane with central charges was proposed in \cite{gmr}. It is invariant under infinitesimal transformations generated by the first class constraint obtained by variations on $\Lambda$ of the Hamiltonian below. This first class constraint satisfies an $SU(N)$ algebra.
 The
resulting Hamiltonian proposed in \cite{gmr} has the form
\begin{equation}
\begin{aligned}
H= & \mathrm{Tr}\left(\frac{1}{2N^{3}}((P_{m})^2+ (\Pi_{r})^{2})+
\right. \nonumber \\& +\frac{n^2}{16\pi^2N^3}(i[X^{m},X^{n}])^2+
\frac{n^2}{8\pi^2N^3}\left(\frac{1}{N}[T_{V_{r}},X^{m}]T_{-V_{r}}+i
[\mathcal{A}_r,X^{m}]\right)^2+\nonumber \\&
+\frac{n^2}{16\pi^2N^3}\left(i[\mathcal{A}_r,\mathcal{A}_s]-
\frac{1}{N}([T_{V_s},\mathcal{A}_r]T_{-V_s}-[T_{V_r},\mathcal{A}_s]
T_{-V_r})\right)^2 + \frac{1}{8}n^2+ \nonumber \\& +\frac{n}{4\pi N^3}
  \Lambda\left([ X^{m},P_{m}]- \frac{i}{N}[T_{V_r},\Pi_{r}]T_{-V_r}
  +[ \mathcal{A}_{r},\Pi_{r}]\right)+ \nonumber \\
   &+ \frac{in}{4\pi N^3}(\overline{\Psi}\Gamma_{-}\Gamma_{m}
   \lbrack{X^{m},\Psi}\rbrack
   -\overline{\Psi}\Gamma_{-}\Gamma_{r}\lbrack{\mathcal{A}_{r},\Psi}
   \rbrack +
  \Lambda \lbrack{\overline{\Psi}\Gamma_{-},\Psi}\rbrack + \nonumber \\
  & - \left.\frac{i}{N} \overline{\Psi}\Gamma_{-}\Gamma_{r}
  [T_{V_{r}},\Psi] T_{-V_r})\right)  \label{e12}
\end{aligned}
\end{equation}
Here $A=(a_1,a_2)$, where the indices $a_1,\,a_2=0,\ldots,N-1$ exclude
the pair $(0,0)$, $V_1=(0,1)$, $V_2=(1,0)$ and $T_0\equiv T_{(0,0)}=N\,
\mathbb{I}$. We agree in the following convention
\begin{equation*}
\begin{gathered}
 X^{m}=  X^{mA}T_{A},\quad\qquad P_{m}=P^{A}_m T_{A},\\
 \mathcal{A}_r= \mathcal{A}^A_r T_{A}, \quad\qquad
 \Pi_{r}=\Pi_r^{A}T_{A},
\end{gathered}
\end{equation*}
where $T_A$ are the generators of the $SU(N)$ algebra:
\begin{equation*}
 [T_{A},T_{B}]= f^{C}_{AB}T_{C}.
\end{equation*}

 We may use the area preserving symmetry  of the supermembrane which
reduces to a $SU(N)$ gauge symmetry on the regularized model to fix a particular gauge. We are
allowed to consider the $A_r, r=1,2$ with the following expressions
\begin{equation}
\begin{aligned}
A_1=A^{(m,0)}T_{(m,0)}\quad m\neq 0\\
A_2=A^{(p,q)}T_{(p,q)}\quad p\neq 0,
\end{aligned}
\end{equation}
with all other components of the gauge fixed to zero.
See \cite{gmr} for details of the gauge fixing procedure. We use the
Heisenberg-Weyl generators $T_{(p,q)}$ to express the $SU(N)$ valued
objects.

We now check that the potential satisfies the hypothesis of
 Lemma~\ref{nuevo_lemma}. The argument follows in analogous fashion as in the previous cases. We may write the potential as
\[V=\frac{n^2}{16\pi^2 N^3}\rho^2 P(\rho,\varphi),
\] where
\[P(\rho,\varphi)=1+2b\rho+a^2\rho^2=\left(a\rho+\frac{b}{a}\right)^2+\left(1-(\frac{b}{a})^2\right)
\] denoting
\begin{equation}
\begin{aligned}
&\rho=\sum_{m,r}\Tr\left( \frac{1}{N}[T_{V_r},X^m]T_{-V_r}\right)^2+\sum_{s,r}\left(\frac{1}{N}[T_{V_s},A_r]T_{-V_s}\right)^2\\ \nonumber &
\varphi=\left(\frac{X^m}{\rho},\frac{A_r}{\rho}\right).
\end{aligned}
\end{equation}
In the above expressions $a=a(\varphi),b=b(\varphi)$, $a$ is taken positive.
Again as in the previous proofs,  see the original argument in \cite{bgmr2},
$\left(\frac{b}{a}\right)$ is the inner product of two unitary vectors, hence $\left(\frac{b}{a}\right)^2\le 1$.

We assume there is a sequence $\rho_l,\varphi_l$ such that $P(\rho_l,\varphi_l)\to 0$ then we must have
\begin{equation}\label{uno}
\left(a(\varphi_l)\rho_l+\frac{b(\varphi_l)}{a(\varphi_l)}\right)\to
0
\end{equation}
and
\begin{equation}\label{dos}
1-\left(\frac{b(\varphi_l)}{a(\varphi_l)}\right)^2\to 0.
\end{equation}
Since $\varphi_l$ takes values on a compact set, necessarily $\varphi_l\to \hat{\varphi}$ (at least for a
sequence) and we must have
\[\left(\frac{b(\hat{\varphi}_l)}{a(\hat{\varphi}_l)}\right)^2=1.
\]
Note that, even if $a\to 0$, the quotient is always well defined
since it is the inner product of unitary vectors. Again, as in
previous sections, the two vectors must be opposite in order to satisfy
(\ref{uno}). In particular we must have \[\left(
\frac{i}{N}
\left[T_{V_r},(\frac{X^m}{\rho})_l\right]T_{-V_r}-\left[\left(\frac{A_r}{\rho}\right)_l,\left(\frac{X^m}{\rho}\right)_l\right]\right) \frac{1}{a_l}\to 0.
\]
Consequently, multiplying by $(\frac{X^m}{\rho})_l$ and taking the trace,
\[
\Tr\left(\left[T_{V_r},(\frac{X^m}{\rho})_l\right]T_{-V_r}\left(\frac{X^m}{\rho}\right)_l\right)\to
0.
\]
On the other hand , however, $\left(\frac{X^m}{\rho}\right)$ converges to
$\left(\frac{X^m}{\rho}\right)_\Lambda$ and the above is a
continuous function. Hence we must
have
\[
\Tr\left(\left[T_{V_r},(\frac{X^m}{\rho})_\Lambda\right]T_{-V_r}\left(\frac{X^m}{\rho}\right)_\Lambda\right)=0
\]
which implies \[ \left(\frac{X^m}{\rho}\right)_\Lambda=0.
\]

The curvature term in the potential  (with the current fixed gauge)  splits into two squared terms:
\[
\Tr\left\vert
\frac{1}{N}[T_{V_2},A_1]T_{-V_2}\right\vert^2+\Tr\left(\frac{1}{N}[T_{V_1},A_2]T_{-V_1}+i[A_1,A_2]
\right)^2.
\]
By applying the previous argument, the condition $P\to 0$ implies
also \[\frac{A_r}{\rho}\to 0.\] The latter is a contradiction, since
it is inconsistent with the definition of $\rho$. We then conclude that the assumption
$P\to 0$ is impossible and consequently $P(\rho,\varphi)>C>0$.

The
hypothesis of Lemma~\ref{nuevo_lemma} is then satisfied, and the regularized $SU(N)$
model including the fermionic terms has a purely discrete spectrum. The bosonic potential of the theory in the continuum was shown in \cite{bgmr2} to satisfy the same type of bound as given in Lemma~\ref{nuevo_lemma}. It
 seems reasonable to expect that in the large $N$ limit, the $SU(N)$ matrix model will converge to the one of the supermembrane with central
charges in the continuum. In this respect, note that the spectrum of the
semiclassical Hamiltonian obtained from this regularized model also converges to
the spectrum of the semiclassical supermembrane with central charge
in the following sense. Take a value of energy $E$ and consider the
eigenvalues of the $SU(N)$ which are below $E$: $\lambda_N^i<E$.
Find the large $N$ limit of $\lambda_N^i$ for a fixed $E$. The
limit of $\lambda_N^i$ is exactly the set of eigenvalues of the
semiclassical supermembrane with central charges satisfying the $E$
bound. This can be explicitly evaluated, \cite{bgmr2}.


\section{The D2-D0 model} \label{section5}
We now consider a model which describes the reduction of a 10D $U(N)$ Super Yang-Mills to (1+0) dimensions \cite{witten}, allowing the presence of monopoles. Consider the (2+1) bosonic Hamiltonian given by
\[
H_B=\int_{\Sigma}\Tr\left[ \frac{1}{2}((P^m)^2+(\Pi^i)^2)+\frac{1}{4}\left(F_{ij}^2+2(D_i X^m)^2+(i[X^m,X^n])^2\right)\right]
\]
which satisfies the monopole condition
\[
\int_{\Sigma}\Tr F=2\pi m,\quad m\in\mathbb{N}.
\]
Factoring the $U(N)$-valued 1-form  $\hat{A}$ as $\hat{A}=aD+A$, with $A\in SU(N)$
 and $D\in U(N)$, the monopole condition becomes \[
\int_{\Sigma} \ud a=2\pi\frac{m}{\Tr D}.\] We then write the $1+0$ Hamiltonian as $H=\frac{1}{2}\tilde{H}$,
where
\[
\tilde{H}=-\Delta +V_B+V_F
\]
for
\begin{align*}
V_B &
=\frac{1}{2}\Tr \left[ (i[X^m,X^n])^2+2(i[X^m,\hat{A}_i])^2+F_{ij}^2 \right], \\
 F_{ij} &= \frac{mD\epsilon_{ij}}{\Tr D}+ i[\hat{A}_i,\hat{A}_j]
 \end{align*}
 and $V_F$ is the supersymmetric Yang-Mills fermionic potential.

 In order to characterize the spectrum of $\widetilde{H}$, we observe that there are directions escaping to infinity at which $V_B$ remains finite. In fact, in any direction at which all the brackets vanish, a wave function can escapes to infinity with finite energy. Therefore the spectrum has necessarily a continuous sector. To construct precisely a wave function in the corresponding $L^2$ space, we introduce
 \[ X=\frac{1}{d+2}(\sum_m X^m +\sum_i A_i)\]
 where the index $m$ runs up to the value $d$ and  the range of $i=1,2$. Let \[\widetilde{X}^m=X^m-X
 \qquad \text{and}\qquad \widetilde{A}_i=\hat{A}_i-X.\]
We simplify the argument by taking $D$ to be diagonal. We may then use the gauge freedom of the model to impose that $X$ is also diagonal. The potential is then re-written as
 \begin{align*}
 V_B&
=\frac{1}{2}\Tr\left[(i[\widetilde{X}^m,\widetilde{X}^n])^2+2(i[\widetilde{X}^m,\widetilde{A}_i])^2+(\frac{m D\epsilon_{ij}}{\Tr D}+i[\widetilde{A}_i,\widetilde{A}_j])^2\right] \\
&\qquad +(d+2)\Tr (i[\widetilde{X}^M,X])^2
\end{align*}
where $M=1,\dots,d+2$, $\widetilde{X}^{d+1}=\widetilde{A}_1$ and $\widetilde{X}^{d+2}=\widetilde{A}_2.$
The crucial point is then that the dependence on $X$ is only quadratic.

Following \cite{dwln}, the above allows us to construct a sequence of wave functions
which are singular Weyl sequence for any $E\in [\frac{1}{2}\frac{m^2 \Tr D^2}{(\Tr D)^2},+\infty)$.
These ``pseudo-eigenfunctions'' are the product of a fermionic wave function $\Psi_F$, the bosonic $L^2$ function
  $\phi_0 (\widetilde{X}^m_T, \widetilde{A}^i_T,X)$ and a compactly supported cutoff
  $\chi_t\equiv \chi(\left\|x\right\|-t,\frac{X}{\left\|X\right\|},\widetilde{X}^M_{||})$.
We denote by $B_{||}$  the diagonal part of a matrix $B$ and $B_{T}=B-B_{||}$.

The expressions  for the normalized $\phi_0$ are \[
\phi_0=\left(\left\|X\right\|^l\frac{
\det{g_{ab}}}{\pi^l}\right)^\frac{1}{4}
 \exp\left(-\frac{\left\|x\right\|}{2}(\widetilde{X}^{Ma}g_{ab}\widetilde{X}^{Mb})\right)
\]
\[\int_{\widetilde{X}_{T}^m}\phi_0^2=1\]
where $l=(d+2)(N^2-N)$. Note that $X^{M}=\widetilde{X}^{Ma}T_a$ where $T_a$ are the generators of $U(N)$ and $g_{ab}$ is the square root of the positive symmetric matrix $(d+2)(f_{ab}^c\frac{X^b}{\left\|X \right\|}f_{cd}^e\frac{X^e}{\left\|X \right\|}).$ Note also that $\widetilde{X}^M_{||}$ are singular directions of this matrix for any $\frac{X^b}{\|X\|}$, so they do not appear in the exponential function. Here $\chi(\left\|X \right\|,\widetilde{X}^M_{||})$ has a compact support, hence $\chi(\left\|X \right\|-t,\widetilde{X}^M_{||},\frac{X}{\left\|X \right\|})$ is a wave function with support moving off to infinity as $t\to+\infty$. It is also normalized by the condition
\[ \int_{\widetilde{X}_{||}^m, X} \chi^2_t=1.
\]

The normalized fermionic wave function is the limit when $t\to+\infty$ of the eigenfunction of the fermionic interacting term associated to the negative eigenvalue with largest modulus. One can now evaluate $\lim_{t\to+\infty}(\Psi,H\Psi)$ where $\Psi=\Psi_F\phi_0\chi$. In this limit the only term of $V_B$ that does not vanish is the constant one. There is a cancelation of the quadratic terms in $X$ between the contribution of the Laplacian and that of  the potential. Also the linear term in $\left\|X\right\|$, arising from the action of the Laplacian on $\phi_0$, is exactly canceled by the fermionic eigenvalue which is also linear in $\left\|X\right\|$. This is a supersymmetric effect. Although the monopole in this case breaks supersymmetry, the cancelation occurs exactly as in the model without monopoles. The resulting consequence of this is that
\[
\lim_{t\to+\infty}(\Psi,H\Psi)=\int_{\widetilde{X}_{||},X} \chi_t(-\Delta_x-\Delta_{\widetilde{X_{||}}})\chi_t +\frac{1}{2}m^2\frac{\Tr D^2}{(\Tr D)^2}.
\]
 One may choose $\chi$ such that the first term is equal to any scalar $E\in[0,+\infty)$. Therefore the spectrum of the original Hamiltonian has a continuous part, comprising the interval $[\frac{1}{2}\frac{m^2 \Tr D^2}{(\Tr D)^2},+\infty).$


\section{Models with non-empty essential spectrum} \label{section6}
In this final section we examine in detail three toy models. The common feature of these three models is the fact that they posses an infinite interval of continuous spectrum. In the first example, we start from what we call the dWLN toy model with two flat directions. We then  remove one of the flat directions to get a second example. We remove both flat directions for a third example. The latter has no flat direction, but the potential becomes finite at some point at infinity. Its semiclassical approximation, including the fermionic terms, has discrete spectrum but the full Hamiltonian has continuous spectrum. It also has a bound state below the bottom of the essential spectrum.  }

\subsection{The dWLN toy model} \label{subsection61} We firstly consider a canonical toy model which was examined in some detail in \cite{dwln}. Arguably this model resembles the spectral properties of the 11D supermembrane. This turns out to be a example in which a supersymmetric matrix model does not satisfy the conditions of Lemma~\ref{nuevo_lemma}.

Let
\begin{equation}\label{matrix}
H=\begin{pmatrix}
-\Delta +x^2y^2
 & x+iy \\
x-iy&-\Delta +x^2y^2
\end{pmatrix}.
\end{equation}
The bosonic contribution from the potential has a purely discrete
spectrum, although classically the system is unstable along the flat directions
associated to $(0,y)$ and $(x,0)$. The eigenvalues of the supersymmetric potential are of the form
\[\lambda_{\Susy}^\pm(x,y)=x^2y^2\pm\sqrt{x^2+y^2}.\]
 In the directions $(0,y)$ and $(x,0)$,  $\lambda_{\Susy}^{-}(x,y)\to-\infty$. This originates a non-empty  continuous spectrum for \eqref{matrix}.

Singular Weyl sequences can be constructed for each $E\in[0,+\infty)$, \cite{dwln}. As the operator is non-negative, the spectrum is continuous comprising the interval $[0,+\infty)$. This argument does not exclude the existence, for example at $E=0$, of an eigenfunction. See the numerical computation also done in \cite{korcyl}. In \cite{ghh} it was established that there does not exists such eigenfunction for this toy model (see also Figure~\ref{fig5} and the corresponding discussion in Section~\ref{subsection62}). For a detailed characterization of this model see \cite{lundholm}.

\subsection{A toy model with a mass term in one direction} \label{subsection62}
In the following supermembrane toy model, one flat direction of the previous one has been eliminated but not both. Let
\begin{equation}\label{matrix7}
H=\begin{pmatrix}
-\Delta+y^2+x^2y^2
 & x+iy \\
x-iy& -\Delta+y^2+x^2y^2
\end{pmatrix}.
\end{equation}
Here $H=\frac{1}{2}\{Q,Q^{\dag}\}$ for
\begin{equation}\label{matrix9}
Q =\begin{pmatrix}
  -xy +iy & i\partial_x+ \partial_y\\
i\partial_x-\partial_y & xy-iy
\end{pmatrix}.
\end{equation}
Classically this system has flat directions, since $V_B$  vanishes at $(x,0)$.
The bosonic Hamiltonian can also be bounded from below by an harmonic oscillator in one of the directions but not in both, so once again this model does not satisfy the hypotheses of Lemma~\ref{nuevo_lemma}.

 The bosonic potential has purely discrete
spectrum, see \cite{amilcar}. This can also be deduced from the following operator bound,
\begin{equation}\begin{aligned}
H_{\mathrm{bosonic}}=&-\Delta +x^2y^2+y^2 \\
&=
\frac{1}{2}(\partial_x^2+\partial_{y}^2)+\frac{1}{2}(\partial_{y}^2+x^2y^2+2y^2)+\frac{1}{2}(\partial_{x}^2+x^2y^2)\\
\nonumber &>\frac{1}{2}(\partial_x^2+\partial_{y}^2+(\sqrt{
x^2+2}+\vert y\vert), \end{aligned}\end{equation}
which implies that $H_{\mathrm{bosonic}}$ has a compact resolvent.
The eigenvalues of the supersymmetric potential are of the form
\[\lambda_{\Susy}^{\pm}(x,y)=x^2y^2+y^2\pm\sqrt{x^2+(y-1)^2}.
\]
Here $\lambda^{-}_{\Susy}(x,y)$ is not bounded from below as in the previous case. Following the arguments of \cite{dwln}, or those presented above, one can show that the spectrum is
continuous comprising the interval $[0,+\infty)$.

Numerical simulations provide an insight on the existence of embedded eigenvalues. In Figure~\ref{fig1} we show the outcomes of the following computational experiment. We have discretized $H$ on test spaces generated by the finite element method, using Hermite polynomials of order 3 on triangular mesh in the region $[-L,L]^2\subset \RR^2$, imposing Dirichlet boundary conditions on the boundary of this box. The conformity of the elements ensures that these test subspaces are contained in the form domain of the corresponding operator $H$. Standard variational arguments ensure that any eigenvalue of the reduction of $H$ in this subspace,
will be an upper bound (counting multiplicity) of spectral point of the original Hamiltonian $H$ as well as its restriction to the box.

The graphs in Figure~\ref{fig1}, from top to bottom, correspond to
numerical approximations of the density function for the ground state for $L=25,\,50,\,100,\,200$. The observable support of this eigenfunction happens to be contained in a thin rectangle near the horizontal axis, so we have only included this region in the pictures. Note that the vertical scaling has been exaggerated.
The quasi-optimal mesh employed to generated these graphs were obtained by an $h$-adaptive procedure.

As $L$ increases, the corresponding ground eigenvalue of $H$ in the box
approaches zero. The numerical evidence strongly suggests that the density function of the eigenfunction has a support that escapes to $+\infty$. This indicate that, perhaps,
there are no embedded eigenvalues also for this model, however this conjecture should be confirmed by further analytical investigation.

\begin{figure}[hh]
\centerline{\includegraphics[height=11cm]{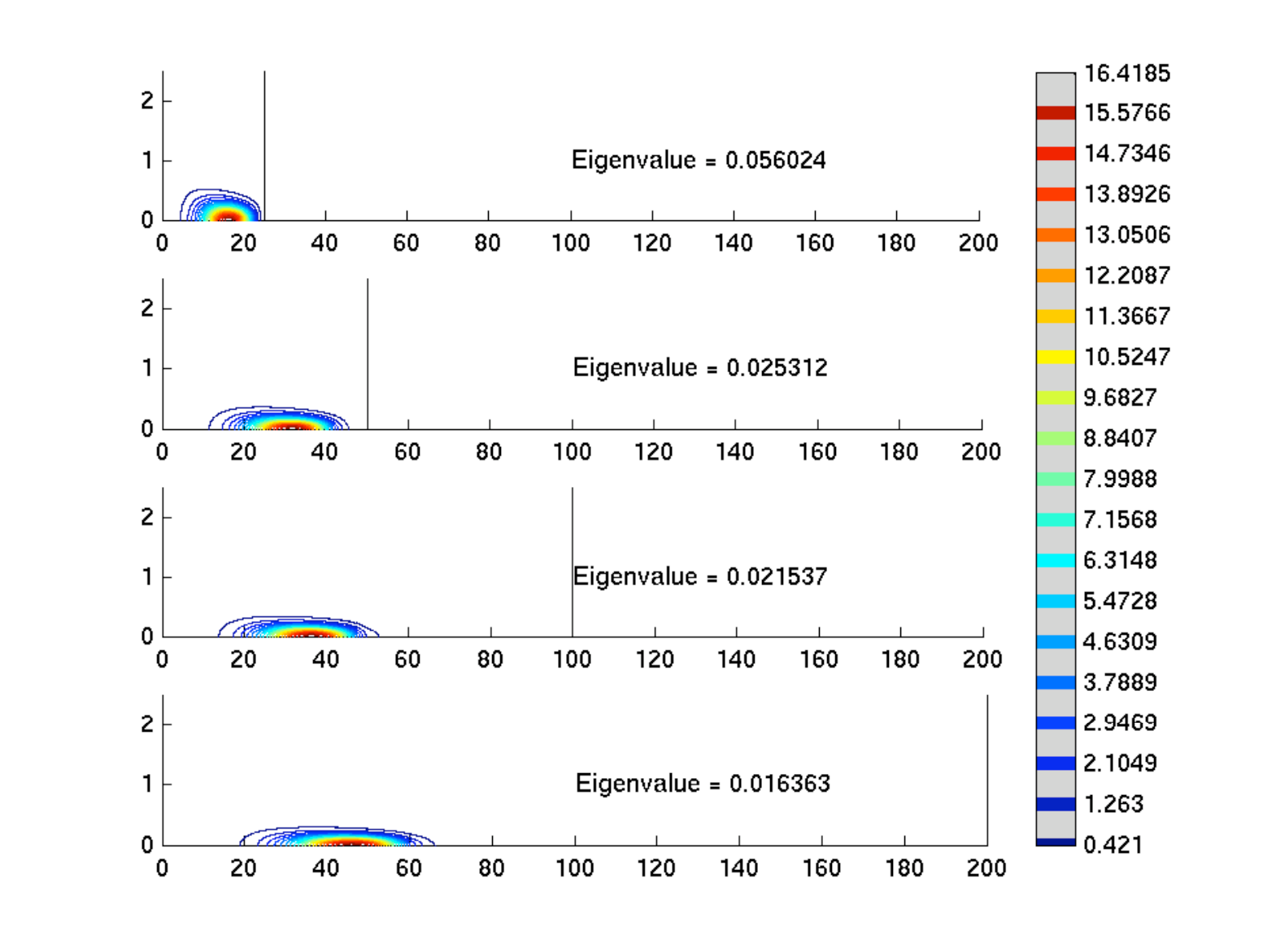}}
\caption{Estimation of the density function for the ground state of $H$ restricted to a box $[-L,L]^2$, subject to Dirichlet boundary conditions, for $L=25\ (\mathrm{top}),\,50,\,100,\,200\ (\mathrm{bottom})$, in the case of the model of Section~\ref{subsection62}. The vertical line depicts part of the boundary of the box. \label{fig1}}
\end{figure}

In order to examine further this statement, in Table~\ref{tab1} we include the numerical estimation of the first 20 eigenvalues of $H$ restricted to the box, for the same test spaces as in Figure~\ref{fig1}. Note that the eigenvalues accumulate in an uniform manner at the origin as $L$ increases. This also indicates that there are no embedded eigenvalues at low energy levels.

\begin{table}[ht!]
\begin{tabular}{c|cccc}
Eigenvalue Number & $L=25$ & $L=50$ & $L=100$ & $L=200$ \\
\hline
1 &  0.0560 &   0.0253 &   0.0215 &   0.0164\\
 &   0.1158 &   0.0466 &   0.0386 &   0.0281\\
 &   0.1970 &   0.0745 &   0.0603 &   0.0426\\
 &   0.2980 &   0.1093 &   0.0852 &   0.0593\\
 5 &  0.4207 &   0.1506 &   0.1097 &   0.0776\\
 \hline
 &   0.5714 &   0.1980 &   0.1372 &   0.0983\\
 &   0.7543 &   0.2515 &   0.1739 &   0.1229\\
 &   0.9701 &   0.3114 &   0.2177 &   0.1511\\
 &   1.2181 &   0.3785 &   0.2661 &   0.1817\\
 10&   1.4974 &   0.4537&    0.3172 &   0.2141\\
 \hline &  1.8056  &  0.5376 &   0.3694   & 0.2492\\
  &  2.1325  &  0.6302 &   0.4249  &  0.2874\\
  &  2.3863  &  0.7313 &   0.4886  &  0.3279\\
  &  2.5953  &  0.8405 &   0.5611  &  0.3700\\
  15&  2.9707 &   0.9576&    0.6401 &   0.4161\\
\hline  &  3.4025 &   1.0828  &  0.7232  &  0.4682\\
  &  3.8690 &   1.2166  &  0.8081  &  0.5257\\
  &  4.3642 &   1.3593  &  0.8633  &  0.5873\\
  &  4.8560 &   1.5104  &  0.8980  &  0.6515\\
  20 &  5.0743  &  1.6687 &   0.9947 &   0.7174
\end{tabular}
\caption{Estimation of the first 20 eigenvalues of $H$ restricted to a box $[-L,L]^2$ subject to Dirichlet boundary conditions in the case of the model of Section~\ref{subsection62}.  \label{tab1}}
\end{table}

For comparison, in Figure~\ref{fig5} we have included a numerical
approximation of the density function of the ground eigenfunction
for the model of Section~\ref{subsection61}.
In this case this density function appears to be localized in a
neighborhood of the axes.
Various other numerical simulations, not presently included,
suggest the following behavior for the ground eigenfunction of the model
Hamiltonian \eqref{matrix}. Unlike the case illustrated in
Figure~\ref{fig1}, the support increases in size,
filling up a ``thin'' T-shaped region close to
$[(-\infty,0)\times 0] \cup [0\times (-\infty,+\infty)]$  as $L\to+\infty$.
In the large $L$ limit, this eigenfunction is not square integrable,
giving rise to a singular Weyl sequence and confirming the absence of an
embedded eigenvalue at the bottom of the spectrum.
This behavior of the approximated eigenfunctions is quite remarkable and it
does not seem to have been noticed before.

\begin{figure}[ht]
\centerline{\includegraphics[height=7cm]{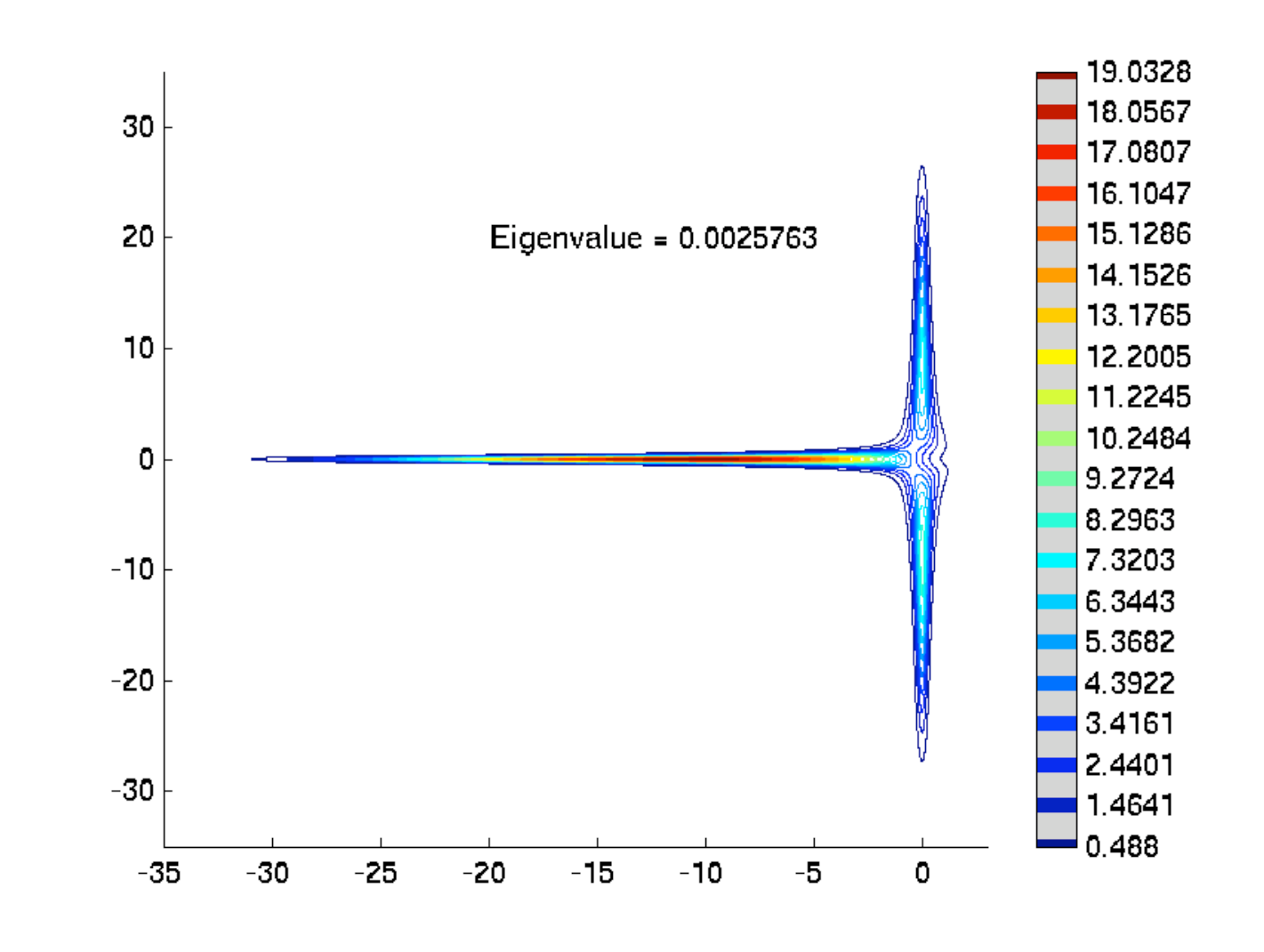}}
\caption{Estimation of the density function for the ground state of $H$ restricted to $[-100,100]^2$ subject to Dirichlet boundary conditions in the case of the model of Section~\ref{subsection61}. The calculations are performed by means of the finite element method and an $h$-adaptive scheme. \label{fig5}}
\end{figure}

\subsection{A toy model with a gap} \label{subsection63}
  Let us now consider the following SUSY Hamiltonian,
\begin{equation*}
H=\begin{pmatrix} -\Delta+V_B(x,y)
 & x+iy+i \\
x-iy-i&-\Delta+V_B(x,y)
\end{pmatrix}
\end{equation*}
where $V_B(x,y)=x^2(y+1)^2+y^2$. This is a supersymmetric quantum mechanical model. In fat, $H=\frac{1}{2}\{Q,Q^{\dag}\}$ where
\begin{equation*}
Q =\begin{pmatrix} -xy-x+iy & i\partial_x+\partial_y\\
i\partial_x-\partial_y & xy+x -iy.
\end{pmatrix}
\end{equation*}
Notice that $H$ is positive therefore it can always be re-written as $H=\hat{Q}\hat{Q}=\frac{1}{2}\{\hat{Q},\hat{Q}\}$ for some self-adjoint operator $\hat{Q}$. Consequently $[H,\hat{Q}]=0$, and $H$ is then the Hamiltonian of a supersymmetric quantum mechanical model.  The same comment is valid for the previous examples.

The bosonic Schr\"ondinger operator $-\Delta + V_B$ has a discrete spectrum, as it satisfies the assumptions of \cite{amilcar}. An independent proof using operator bounds as mentioned above is also possible. The semiclassical approximation, obtained by only considering the quadratic terms in the bosonic potential, satisfies the bound of Lemma~\ref{nuevo_lemma} and hence it has a discrete spectrum with finite multiplicity. The potential $V_B$ alone, does not satisfy this bound directly, because in the direction $(x\to +\infty,y=-1)$ it remains finite. In fact, it turns out to have a non-empty continuous spectrum. The reason for this relies in the behavior of the eigenvalues of the supersymmetric potential,
\[
\lambda_{\Susy}^{\pm}(x,y)=x^2(y+1)^2+y^2\pm\sqrt{x^2+(y+1)^2}.
\]
We observe that $\lambda_{\Susy}^-$ is not bounded from below (although $H\ge 0$) when $y=-1$.

The eigenvalues and eigenfunctions of the fermionic potential
\begin{equation*}
V_F=\begin{pmatrix} -0 & x+iy+i\\
x-iy-i & 0
\end{pmatrix}
\end{equation*}
are \begin{equation*}
\lambda_{\pm}(x,y)=\pm \sqrt{x^2+(y+1)^2}
\qquad \text{and}\qquad \Psi_{\pm}= \frac{1}{\sqrt{2}}\begin{pmatrix}  1\\
\frac{x-i(y+1)}{\lambda_{\pm}}
\end{pmatrix}
\end{equation*}
respectively.
Note that \begin{equation}
\begin{aligned}
\lim_{x\to+\infty}\Psi_{-}= \frac{1}{\sqrt{2}}\begin{pmatrix}  1\\
-1
\end{pmatrix}
\end{aligned}
\end{equation}
and for $x>>(y+1)$ we have $\lambda_{-}=-|x|+O(\frac{1}{|x|}).$ Consider, as in Section~\ref{section5}, the wave function
\[
\Psi=\varphi_0\chi_t \Psi_{-}
\]
where \[
\varphi_0=\left(\frac{|x|}{\pi}\right)^{\frac{1}{4}} \exp\left(-\frac{|x|}{2}(y+1)^2\right),\qquad \chi_t=\chi(|x|-t)\] and the cutoff $\chi(|x|)$ has compact support.
Then
\[
-\frac{\partial^2}{\partial y^2}\varphi_0= -x^2(y+1)^2\varphi_0 +|x|\varphi_0.
\]
The second term on the right hand side exactly cancels the negative contribution from $\lambda_{-}$ and the Laplacian dominates the linear fermionic term. Also for the $y^2$ term, we get
\[  (\Psi, y^2\Psi)\to 1 \quad\textrm{when} \quad t\to+\infty.\]
Finally,
\[
\lim_{t\to+\infty}(\Psi,H\Psi)=1+E \quad \textrm{where} \quad E\in[0,+\infty)
\]
by choosing $\chi$ such that $\int \chi_t (-\frac{\partial^2}{\partial x^2}\chi_t)=E$, which is always possible. Therefore, the interval $[1,+\infty)$ lies in the spectrum of $H$.

Note that there may still be bound states below 1. A similar numerical experiment as the one performed in the previous section confirms this hypothesis, indicating the presence of a ground eigenvalue $\lambda\approx 0.81419$. In Figure~\ref{fig2} we depict a numerical approximation of the density function of the corresponding ground eigenfunction.

\begin{figure}[ht]
\centerline{\includegraphics[height=7cm]{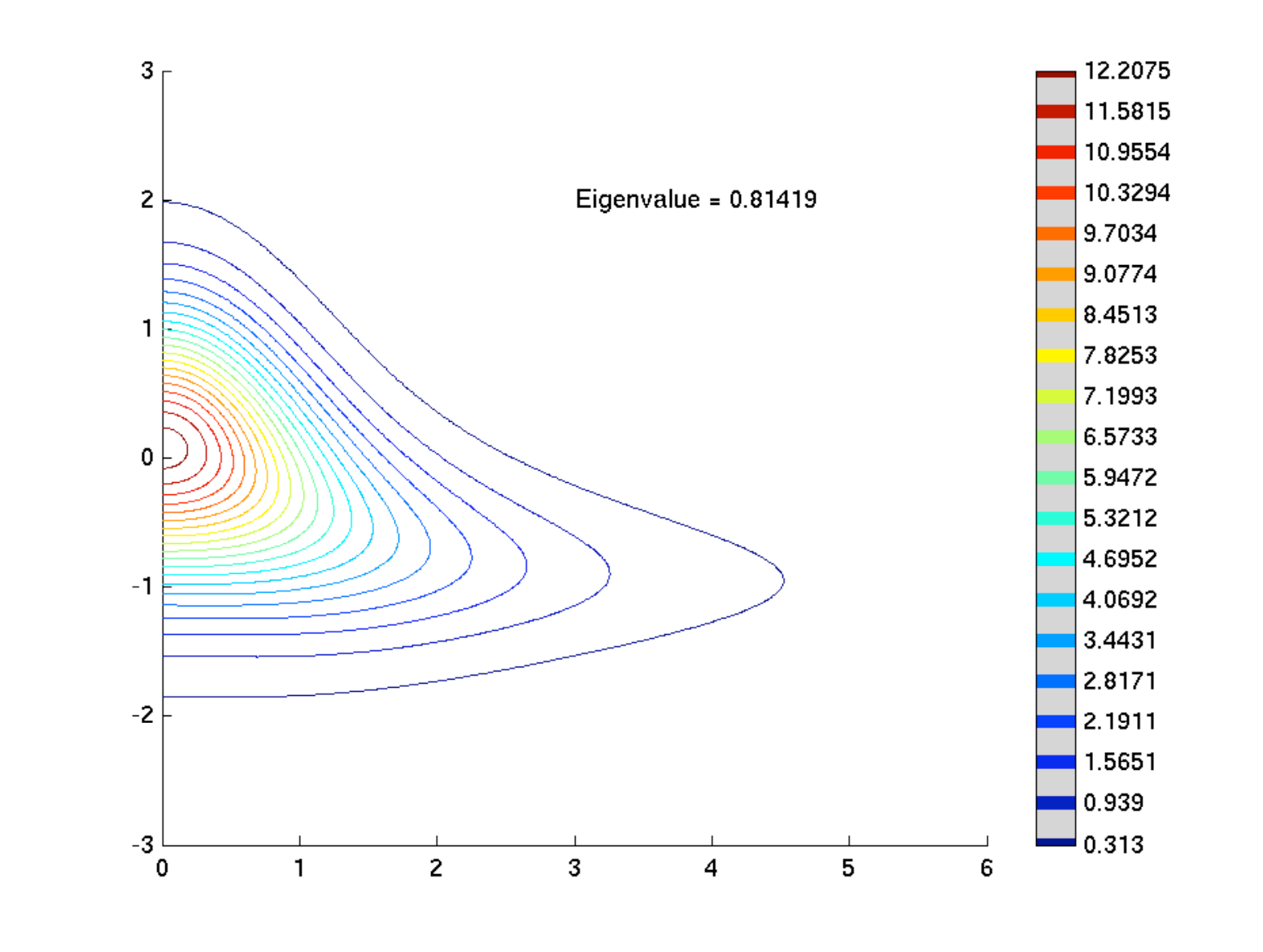}}
\caption{Estimation of the density function for the ground state of $H$ restricted to $[-40,40]^2$ subject to Dirichlet boundary conditions in the case of the model of Section~\ref{subsection63}. The calculations are performed by means of the finite element method and an $h$-adaptive scheme. \label{fig2}}
\end{figure}

In Table~\ref{tab2} we include the numerical estimation of the first 6 eigenvalues of
$H$ restricted to $[-L,L]^2$. This data strongly suggests that the only eigenvalue of $H$ at the bottom of the spectrum is the ground eigenvalue.
This is further confirmed by figures~\ref{fig3} and \ref{fig4}, where we depict
the density functions of the eigenfunctions corresponding to the second and third eigenvalues. The graphs from top to bottom correspond to the values $L=10,\,20,\,40$.
As in the model of Section~\ref{subsection62}, the support of the density functions seem to lie on a narrow strip near the horizontal axis, so we have exaggerated the vertical
scale. Note that in both cases, the eigenfunction is localized in a support that seems to escape to $+\infty$, suggesting no embedded eigenvalue. Note that in this model,  the existence of mass terms by themselves
does not guarantee discreteness of
the supersymmetric spectrum.

\begin{table}[ht!]
\begin{tabular}{c|ccc}
Eigenvalue Number & $L=10$ & $L=20$ & $L=40$ \\
\hline
1&   0.8218 &   0.8142 &   0.8142 \\
2&    1.1937 &   1.0733 &   1.0662 \\
3 &    1.5474 &   1.1707 &   1.1511 \\
4 &    2.0405 &   1.3190 &   1.2803 \\
5 &    2.5313 &    1.5173 &    1.4515 \\
6 &    3.0876 &    1.7618 &    1.6582
\end{tabular}
\caption{Estimation of the first 6 eigenvalues of $H$ restricted to a box $[-L,L]^2$, subject to Dirichlet boundary conditions in the case of the model of Section~\ref{subsection63}. \label{tab2}}
\end{table}

\begin{figure}[ht]
\centerline{\includegraphics[height=9cm]{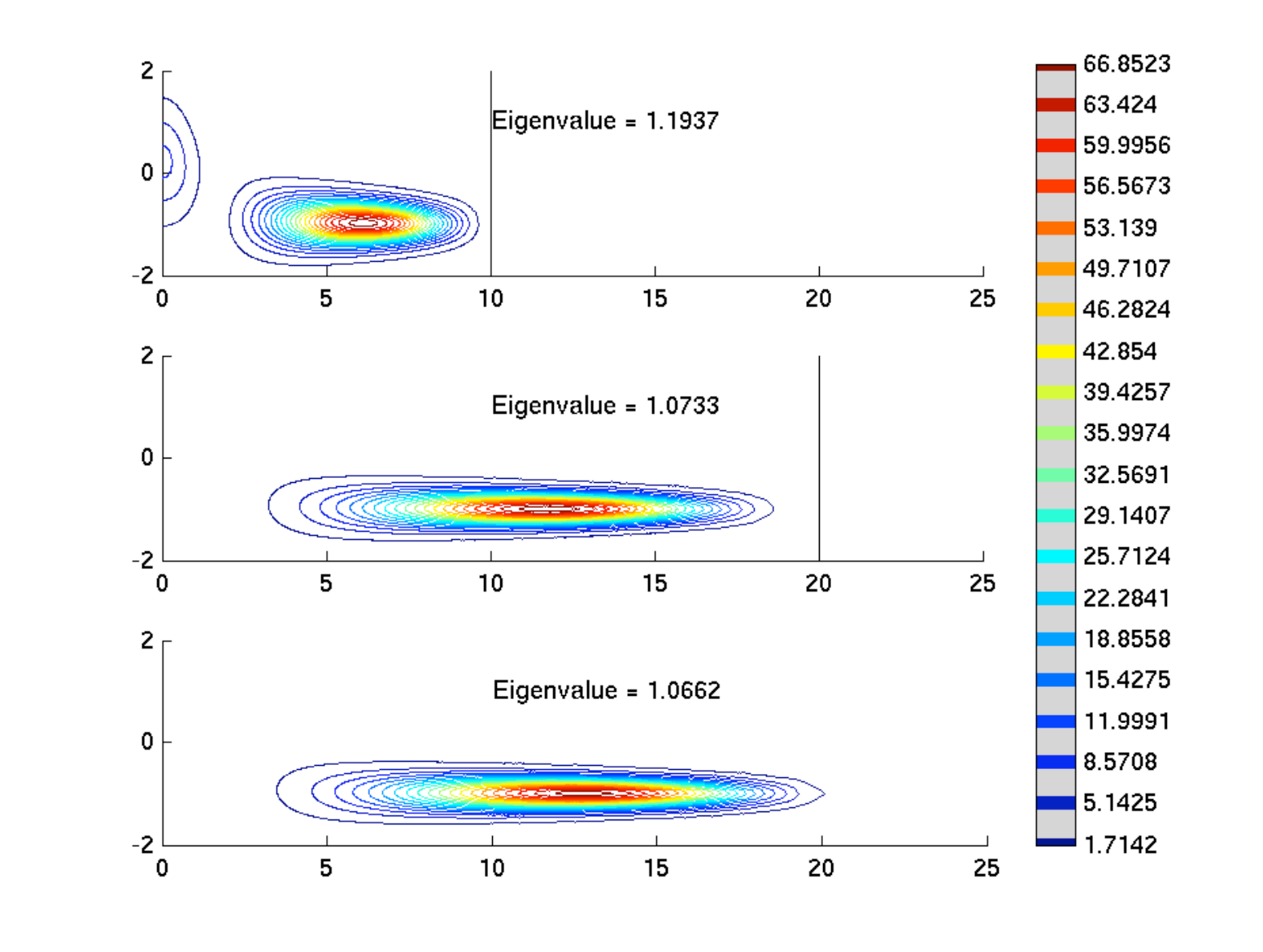}}
\caption{Estimation of the density function for the first excited state of $H$ restricted to a box $[-L,L]^2$, subject to Dirichlet boundary conditions, for $L=10\ (\mathrm{top}),\,20,\,40\ (\mathrm{bottom})$ in the case of the model of Section~\ref{subsection63}. The vertical line depicts part of the boundary of the box. \label{fig3}}
\end{figure}

\begin{figure}[ht]
\centerline{\includegraphics[height=9cm]{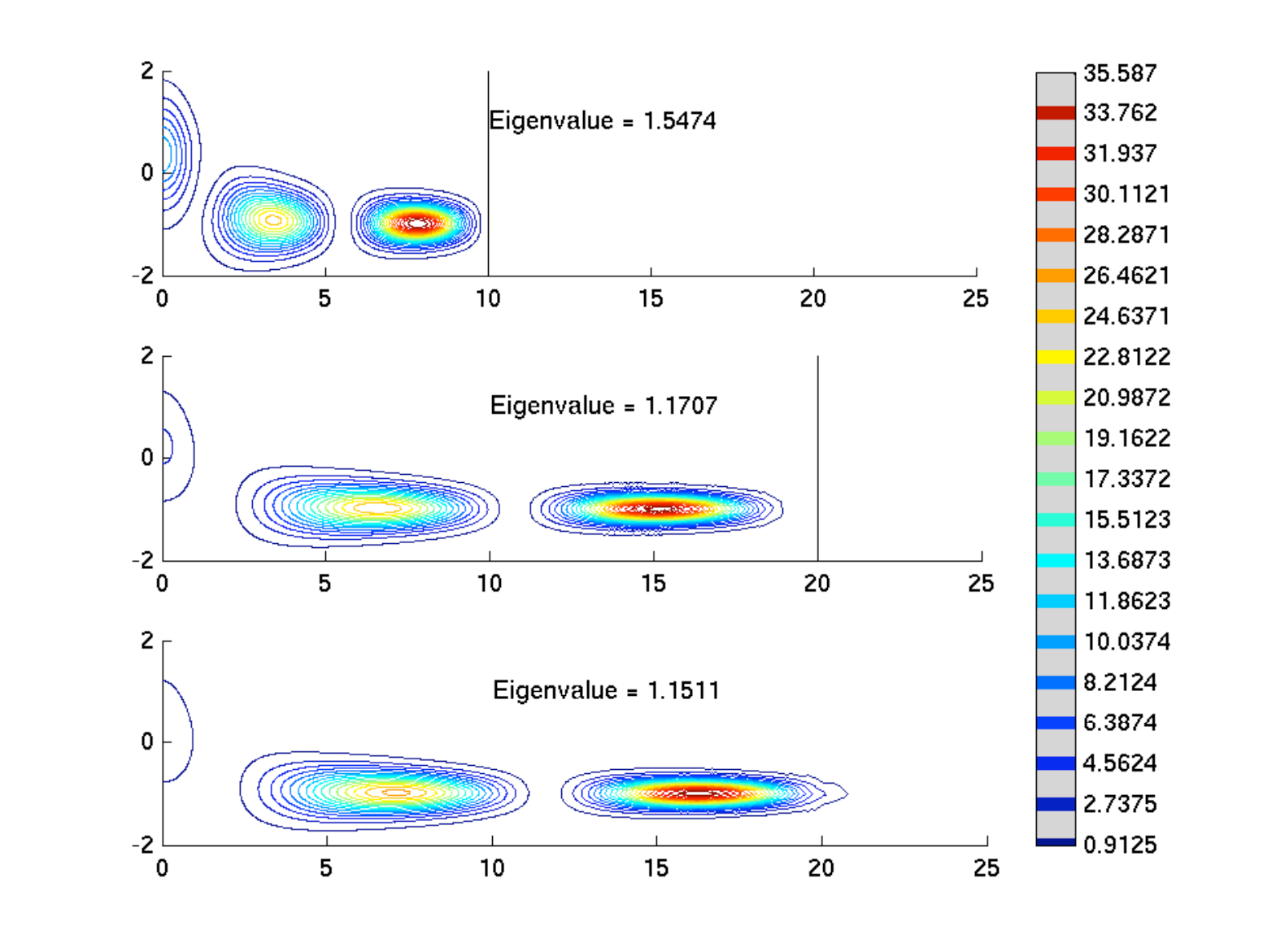}}
\caption{Estimation of the density function for the second excited state of $H$ restricted to a box $[-L,L]^2$, subject to Dirichlet boundary conditions, for $L=10\ (\mathrm{top}),\,20,\,40\ (\mathrm{bottom})$ in the case of the model of Section~\ref{subsection63}. The vertical line depicts part of the boundary of the box. \label{fig4}}
\end{figure}

\section*{Discussion}
 We established  sufficient conditions for spectral discreteness in matrix models. Our criterion apply both to SUSY and non-SUSY models. Those models which satisfy the conditions have a purely discrete spectrum (with finite multiplicity) and only accumulate at $+\infty$. Their resolvent, as well as their associated heat semi-group, are compact.
Mathematically this is an amenable property
as far as the study of the high energy eigenvalues is concerned. On the
one hand, this guarantees the existence of a complete set of
eigenfunctions, which can be used to decompose the action of the operator
in low/high frequency expansions. On the other hand, the study of
eigenvalue asymptotics for the resolvent (or the corresponding heat
kernel) in the vicinity of the origin, can be carried out by means of the
so-called Schatten - von Neumann ideals. None of this extends in
general, if the Hamiltonian has a non-empty essential spectrum.

We proved that the BMN supersymmetric model satisfies sufficiency conditions for discreteness of its spectrum. The bound we find diverges in the large $N$ limit. An open question is still open regarding how to characterize the spectrum in this regime. We conjecture that there will be a non empty essential spectrum and the presence of gaps is not ruled out.  We introduced a top-down regularization for the supermembrane with central charges in addition to the already known SU(N) regularization\cite{gmr, bgmr}. It is possible to show that the operatorial bound for this top-down regularization remains finite for large $N$. In fact the bosonic potential of the exact Hamiltonian with central charge is known to satisfy the bound \cite{bgmr2}, since the bosonic potential of the regularized version converges to the one of the exact theory.  We conclude that the bound should remain finite in the large $N$ limit. This argument gives evidence that  the supermembrane with central charges has a discrete spectrum and consequently could be considered as a fundamental membrane of M-theory.  We demonstrated that the D2-D0 system has a continuous spectrum  as the original (1+0) SYM matrix model. There is also a shift at the bottom of the essential spectrum.

Finally, we examined numerically the ground state of various models including the dWLN toy model. Our simulations indicates that, in the presence of a mass term in one direction, the model preserves its continuous spectrum with no embedded eigenvalue. We also introduced a toy model which has several properties. Firstly, it does not have flat directions. Secondly, its semiclassical approximation has discrete spectrum but the Hamiltonian has continuous spectrum. It also contains a bound state, the ground state, below the bottom of the essential spectrum. It has a gap which we explicitly estimate.


\section*{Acknowledgements}
We thank Y.~Lozano for useful discussions. We are kindly grateful to E.~Witten for motivating discussions at an early stage of this work.
The work of MPGM is funded by
the Spanish ``Ministerio de Ciencia e Innovaci\'on'' (FPA2006-09199) and
the ``Consolider-Ingenio 2010'' programmes CPAN (CSD2007-00042).
AR was supported by the programme ``Campus de la Excelencia de la Universidad de Oviedo''. AR and LB would like to thank the Theoretical Physics Group at the Universidad de Oviedo, where part of this work was carried out, for their kind hospitality and financial support.


\end{document}